\documentclass[12pt] {article}
\usepackage{amsmath,amssymb,amsfonts}
\usepackage{amsthm}
\usepackage{bm}
\usepackage[all]{xy}
\usepackage{tikz-cd}
\usepackage[a4paper, top=2.5cm, bottom=2.5cm, left=2.5cm, right=2.5cm]{geometry}
\usepackage{url} 
\usepackage[colorlinks=true, urlcolor=blue, linkcolor=blue]{hyperref}

\newtheorem{definition}{Definition}[section]
\newtheorem{theorem}{Theorem}[section]
\newtheorem{lemma}{Lemma}[section]

\newtheorem{proposi}{Proposition}[section]
\newtheorem{remark}{Remark}[section]
\usepackage{titlesec}
\titleformat{\subsection}[runin]{\normalfont\bfseries}{\thesubsection}{1em}{}
\titlespacing*{\subsection}{\parindent}{0pt}{1em} 
\titleformat{\subsubsection}[runin]{\normalfont\bfseries}{\thesubsubsection}{1em}{}
\titlespacing*{\subsubsection}{\parindent}{0pt}{1em}
\title{A Simplicial Approach to Higher Geometric Quantization}
\author{Zhang Qian}
\date{}
\begin{document}
\maketitle

\begin{abstract}
	This paper develops a unified framework for observables in $n$-plectic geometry, extending the $L_\infty$-algebra of Hamiltonian $(n-1)$-forms to Hamiltonian forms of all degrees via a degree‑shifting Grassmann variable $u$ that encodes submanifold codimension. Interpreting $k$-form observables as $k$-dimensional topological defects yields a recursive gluing construction that assembles into a semi‑simplicial set $\bm{\mathrm{sOb}}_\bullet(M)$, which we prove satisfies the Kan filling property, thereby providing an $n$-groupoid model for observables. From this semi‑simplicial perspective we extract cohomological invariants and construct a recursive inner product leading to a categorified pre‑$n$-Hilbert space. The hierarchical structure of polarizations yields a natural quantization scheme matching the $1$-polarization classification of multisymplectic geometry. The resulting framework bridges higher algebraic structures with higher categorical geometry and establishes a systematic foundation for the geometric quantization of extended objects.
\end{abstract}
\noindent\textbf{Keywords:} multisymplectic geometry,  $L_\infty$-algebra,  geometric quantization
\tableofcontents
\section{Introduction}\label{s1}

Multisymplectic geometry generalizes symplectic geometry to manifolds $M$ equipped with a closed, nondegenerate $(n+1)$-form $\omega$; the pair $(M,\omega)$ is called an $n$-plectic manifold~\cite{HF,RN}. While symplectic geometry ($n=1$) describes point particles, $n$-plectic geometry governs the dynamics of $(n-1)$-dimensional extended objects~\cite{Baez2010ya,Baez2005qu}. The coupling of such objects to background fields is encoded by Wess--Zumino terms, generalizing the minimal coupling of a point particle to an electromagnetic field~\cite{G1983w,G1983,Witten1983ar}. Moreover, multisymplectic methods retain manifest covariance, avoiding the $3+1$ split of canonical quantization, and therefore provide a natural framework for field theory.

In $n$-plectic geometry, observables are Hamiltonian forms $\alpha$ satisfying $d\alpha = -\iota_{v_\alpha}\omega$ for some multivector field $v_\alpha$. Seminal work by Baez, Rogers and others established that the space of top-degree Hamiltonian forms (degree $n-1$) carries an $L_\infty$-algebra structure, generalising the Poisson algebra of symplectic geometry~\cite{Baez2009,Rogers_2011}. Restricting to top-degree forms, however, is both mathematically unnatural and physically inadequate: observables of various degrees appear naturally in extended topological field theories~\cite{Baez1995,Lurie2009}, generalised symmetries~\cite{B2023,B2024,B2025,G2014}, and topological phases of matter~\cite{S2018,Q2024}; in our framework such observables are precisely Hamiltonian forms, with a Hamiltonian $k$-form interpreted as a $k$-dimensional observable. A complete theory of observables in multisymplectic geometry must therefore encompass Hamiltonian forms of every degree.

The present work addresses this need and simultaneously lays the combinatorial groundwork for higher geometric quantization. We proceed through the following constructions.

\noindent\textbf{Algebraic construction.}
We extend the $L_\infty$-algebra of observables to all Hamiltonian forms by introducing a formal variable $u$ of bidegree $(-1,1)$. This variable shifts the form degree and encodes the codimension of a submanifold inside the worldvolume of an extended object. The resulting bigraded space $\bigoplus_{j=0}^{n-1}\Omega_{\mathrm{Ham}}^{n-j-1}(M)\langle u^j\rangle$ is equipped with higher brackets defined via interior products with Hamiltonian multivector fields. We prove that these brackets satisfy the homotopy Jacobi identities, yielding a graded $L_\infty$-algebra (Theorem~\ref{thmmain}) that unifies observables on submanifolds of different codimensions. Within this algebra we identify a higher Heisenberg subalgebra generated by mutually commuting Hamiltonian vector fields, which generalizes the classical Heisenberg algebra of symplectic geometry.

\noindent\textbf{Geometric construction.}
Interpreting $k$-form Hamiltonian observables as $k$-dimensional topological defects, we develop a recursive gluing construction that assigns Hamiltonian forms to submanifolds of every codimension. This structure naturally assembles into a semi-simplicial set $\bm{\mathrm{sOb}}_\bullet(M)$ (Definition~\ref{s411}), whose $k$-simplices are smooth singular $k$-simplices equipped with Hamiltonian $k$-forms. We prove that $\bm{\mathrm{sOb}}_\bullet(M)$ satisfies the Kan filling property (Theorem~\ref{s4t1}), thereby providing an $n$-groupoid model for observables. The linearization $\mathbb{R}[\bm{\mathrm{sOb}}_\bullet(M)]$ yields a combinatorial $n$-vector space, and morphisms are captured by the path-space semi-simplicial set $\bm{\mathrm{sOb}}_\bullet(\bm{\mathrm{DiffMan}}(\Delta^1,M))\cong\bm{\mathrm{sOb}}_{\bullet+1}(M)$. A cochain complex extracted from $\bm{\mathrm{sOb}}_\bullet(M)$ encodes gluing obstructions and adiabatic invariants.

\noindent\textbf{Higher geometric quantization.}
The framework developed in this paper also lays the combinatorial foundation for the geometric quantization of $n$-plectic manifolds. Under the prequantum condition $[\omega/2\pi]\in H^{n+1}(M,\mathbb{Z})$, the semi-simplicial set $\bm{\mathrm{sOb}}_\bullet(M)$ furnishes a combinatorial model of the corresponding $n$-gerbe with connection. Quantum states are defined as $U(1)$-valued cochains on $\bm{\mathrm{sOb}}_\bullet(M)$, forming a cosimplicial set $\bm{\mathrm{qOb}}^\bullet(M)$, and a recursive inner product taking values in $U(1)[\bm{\mathrm{qOb}}^{\bullet+1}(M)]$ endows the linearized space with the structure of a categorified pre-$n$-Hilbert space. The resulting hierarchy of polarizations matches the $1$-polarization scheme of multisymplectic geometry~\cite{r2011} and offers a systematic path toward higher quantisation. 

The paper is organized as follows. Section~\ref{s2} recalls the necessary background on graded linear algebra, multivector calculus, $L_\infty$-algebras, simplicial sets, and $n$-plectic geometry. Section~\ref{s3} contains the algebraic construction of the graded $L_\infty$-algebra of all Hamiltonian forms and the higher Heisenberg subalgebra. Section~\ref{s4} develops the geometric framework: the semi-simplicial set $\bm{\mathrm{sOb}}_\bullet(M)$, its Kan property, the associated $n$-vector space, and cohomological invariants. Section~\ref{s5} relates $\bm{\mathrm{sOb}}_\bullet(M)$ to $n$-gerbes and prequantization, introduces the cosimplicial set of quantum states, the recursive inner product, and the hierarchy of polarizations. Section~\ref{s6} concludes with an outlook on future directions.

\section{Preliminaries and Notation convention}\label{s2}
One of the main objectives of this work is to demonstrate that the algebra of observables in multisymplectic geometry carries a natural $L_\infty$-algebra structure. The purpose of this section is twofold: to set up the necessary foundations and to fix the notation used throughout.

\subsection{Graded linear algebra}
Let $V$ be a $\mathbb{Z}$-graded vector space. For homogeneous elements $x_1, \dots, x_n \in V$ and a permutation $\sigma \in S_n$, the Koszul sign $\epsilon(\sigma) = \epsilon(\sigma; x_1, \dots, x_n)$ is defined by
$$x_1 \wedge\cdots\wedge x_n=\epsilon\left(\sigma;x_1,\cdots,x_n\right)x_{\sigma\left(1\right)}\wedge\cdots\wedge x_{\sigma\left(n\right)}$$
in the free graded-commutative algebra generated by $V$. Let $(-1)^\sigma$ denote the usual permutation sign; note that $\epsilon(\sigma)$ excludes this factor.

A permutation $\sigma \in S_{p+q}$ is a $(p,q)$-unshuffle if $\sigma(i) < \sigma(i+1)$ whenever $i = p$. The set of such unshuffles is denoted $\operatorname{Sh}(p,q)$ (e.g., $\operatorname{Sh}(2,1) = {\mathrm{id}, (2,3), (1,2,3)}$).

A linear map $f: V^{\otimes n} \to W$ between graded vector spaces is graded skew-symmetric iff
$$f\left(v_{\sigma\left(1\right)},\cdots,v_{\sigma\left(n\right)}\right)=\left(-1\right)^{\sigma}\epsilon\left(\sigma\right)f\left(v_1,\cdots,v_n\right),$$
for all $\sigma \in S_n$. The degree of $x_1 \otimes \cdots \otimes x_n$ is $|x_1 \otimes \cdots \otimes x_n| = \sum_{i=1}^{n} |x_i|$.

\subsection{Multivector calculus}
Let $\mathfrak{X}(M)$ denote the $C^\infty(M)$-module of vector fields on a smooth manifold $M$. The graded-commutative algebra of multivector fields is defined as $\mathfrak{X}^{\wedge\bullet}(M) = \bigoplus_{k=0}^{\dim M} \bigwedge^{k} \mathfrak{X}(M)$. This space carries the Schouten-Nijenhuis bracket $[\cdot,\cdot]$, a degree$-1$ Lie bracket satisfying the graded Leibniz rule, which makes $\mathfrak{X}^{\wedge\bullet}(M)$ into a Gerstenhaber algebra. For decomposable multivectors, the bracket is given explicitly by
\begin{align}
	[u_1\wedge\cdots\wedge u_m,\; v_1\wedge\cdots\wedge v_n] = &\sum_{i=1}^{m}\sum_{j=1}^{n} (-1)^{i+j}\,[u_i,v_j]\wedge u_1\wedge\cdots\wedge\hat{u}_i\wedge\cdots\wedge u_m \nonumber\\
	&\wedge v_1\wedge\cdots\wedge\hat{v}_j\wedge\cdots\wedge v_n,
\end{align}
where $[u_i,v_j]$ denotes the ordinary Lie bracket of vector fields and the hats indicate omitted factors.

The following proposition summarizes the fundamental properties of the Schouten-Nijenhuis bracket.

\begin{proposi}[\cite{MP}]\label{s2pro}
	For homogeneous multivector fields $u, v, w$, the following identities hold:
	\begin{itemize}
		\item[(1)] $[u,v] = -(-1)^{(|u|-1)(|v|-1)}[v,u]$;
		\item[(2)] $[u,[v,w]] = [[u,v],w] + (-1)^{(|u|-1)(|v|-1)}[v,[u,w]]$;
		\item[(3)] $[u,v\wedge w] = [u,v]\wedge w + (-1)^{(|u|-1)|v|}\,v\wedge[u,w]$,
		
		$[u\wedge v,w] = u\wedge[v,w] + (-1)^{(|w|-1)|v|}\,[u,w]\wedge v$.
	\end{itemize}
\end{proposi}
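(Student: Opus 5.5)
The plan is to reduce all three identities to the decomposable formula for the bracket and the ordinary Lie bracket on $\mathfrak{X}(M)$. Every identity is $\mathbb{R}$-trilinear (respectively bilinear). Each side is also local: the bracket is defined by a bidifferential expression, so its value on an open set depends only on the restrictions of the arguments there. Hence it suffices to verify the identities locally, and there multivector fields are finite sums of wedges of vector fields with function coefficients. We regard functions as $0$-vectors and extend the formula by $[X,f]=X(f)$ and $[f,g]=0$. One then checks that the resulting bracket is well defined, i.e. compatible with the alternating relations of $\wedge$. This is immediate, since the right-hand side of the displayed formula is alternating in the $u_i$ and in the $v_j$ separately. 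Compatibility with $C^\infty(M)$-linearity in the wedge factors is the content of (3) in the case where one factor is a function.

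\textbf{Step 1: identity (3).} I would prove the Leibniz rule (3) first, directly from the formula. Take $v=v_1\wedge\cdots\wedge v_p$ and $w=v_{p+1}\wedge\cdots\wedge v_n$. Split the double sum over $j$ into $j\le p$ and $j>p$. The terms with $j\le p$ give $[u,v]\wedge w$ immediately. For the terms with $j>p$, move $[u_i,v_j]$ past $v_1\wedge\cdots\wedge v_p$ and past the remaining $u$-factors; this produces the Koszul sign $(-1)^{(|u|-1)|v|}$. The second form of (3) then follows from the first together with (1).

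\textbf{Step 2: identity (1).} Graded antisymmetry follows by swapping the roles of $i$ and $j$ in the formula. One uses $[u_i,v_j]=-[v_j,u_i]$ and counts transpositions. Reordering the block $u_1\cdots\hat u_i\cdots u_m$ past $v_1\cdots\hat v_j\cdots v_n$ costs $(-1)^{(m-1)(n-1)}$, and the extra minus sign from the Lie bracket accounts for the remaining sign.

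\textbf{Step 3: identity (2).} This is the main obstacle, and the plan is induction rather than brute-force expansion. Define $J(u,v,w)$ to be the difference of the two sides of (2). Using (1) and (3), one shows that $J$ is a graded derivation in each argument: for example, $J(u,v,w_1\wedge w_2)=J(u,v,w_1)\wedge w_2\pm w_1\wedge J(u,v,w_2)$. The delicate part is tracking the Koszul signs so that the terms not involving $J$ cancel; I would carry this out carefully, once, for the third slot, and then use (1) to transfer the result to the other slots. By the derivation property, it suffices to check $J=0$ when $u,v,w$ are functions or vector fields. Any bracket of two functions vanishes. Hence if all three arguments are vector fields, (2) is the ordinary Jacobi identity. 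If exactly one argument is a function, (2) is the statement that the commutator of derivations $[X,Y]$ acts on $f$ as $XY-YX$. If two or more arguments are functions, every term vanishes, since each term then contains a bracket of two functions (or of a function with the $0$-vector $[X,f]$).
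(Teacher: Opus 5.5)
The paper gives no proof of this proposition. It is quoted from Michor~\cite{MP}, and the surrounding text already presents the Schouten--Nijenhuis bracket as a known Gerstenhaber bracket, so (1)--(3) are in effect taken as its defining properties. Your outline is the standard self-contained derivation, and it is a genuinely different route from a bare citation.

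The derivation works, with two remarks.

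\emph{Sign in Step~1.} Moving the block $[u_i,v_j]\wedge u_1\wedge\cdots\wedge\widehat{u_i}\wedge\cdots\wedge u_m$ past $v$ gives $(-1)^{|u||v|}$. The reindexing $j=p+j'$ gives a further $(-1)^{|v|}$. Together these produce $(-1)^{(|u|-1)|v|}$.

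\emph{Step~3 is easier than you expect.} Write $D_x=[x,\cdot\,]$; by (3), $D_x$ is a graded derivation of degree $|x|-1$. Then $J(u,v,\cdot)=D_uD_v-(-1)^{(|u|-1)(|v|-1)}D_vD_u-D_{[u,v]}$ is a graded commutator of derivations minus a derivation of the same degree. It is therefore a derivation, with no delicate cancellation. Identity (1) makes $J$ graded antisymmetric in all three slots, and your generator check finishes the argument.

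\emph{What your route buys.} It makes explicit the degree-zero conventions that (3) forces but the paper leaves unstated, since its displayed formula covers only $m,n\ge 1$. These are $[X,f]=X(f)$, $[f,g]=0$, and more generally $[u_1\wedge\cdots\wedge u_m,f]=\sum_{i}(-1)^{m-i}u_i(f)\,u_1\wedge\cdots\wedge\widehat{u_i}\wedge\cdots\wedge u_m$. You should state this last formula explicitly, because Step~3 uses (3) with a function factor for arbitrary $u$.

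\emph{What to tighten: well-definedness.} The real issue is not alternation over $\mathbb{R}$. It is compatibility with the $C^\infty(M)$-balancing $(fX)\wedge Y=X\wedge(fY)$, since the formula is not $C^\infty(M)$-linear in individual factors. Settling this by appeal to (3) is circular if you are building the bracket from the formula. You can either take the bracket as given, as the paper does, or check it directly. Putting $f$ on the $a$-th factor $u_a$ gives $f$ times the unscaled formula plus $\sum_j(-1)^j v_j(f)\,u_1\wedge\cdots\wedge u_m\wedge v_1\wedge\cdots\wedge\widehat{v_j}\wedge\cdots\wedge v_n$. This correction does not depend on $a$, and the same holds symmetrically on the $v$-side.
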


From Proposition~\ref{s2pro} we immediately obtain the following useful identity, which will be essential in computing the $L_\infty$-algebraic structure of observables.

\begin{lemma}\label{s2l1}
	For homogeneous multivector fields $u$ and $v_1,\dots,v_n$ on a manifold $M$,
	\begin{align*}
		[u, v_1\wedge\cdots\wedge v_n] = \sum_{i=1}^{n} (-1)^{\sum_{\alpha=1}^{i-1} |v_i|\,|v_\alpha|}\,
		[u,v_i]\wedge v_1\wedge\cdots\wedge \hat{v}_i\wedge\cdots\wedge v_n.
	\end{align*}
\end{lemma}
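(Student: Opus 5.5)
Induction on $n$, using the first identity of Proposition~\ref{s2pro}(3), $[u,v\wedge w]=[u,v]\wedge w+(-1)^{(|u|-1)|v|}v\wedge[u,w]$. The case $n=1$ is trivial. For the inductive step I write $v_1\wedge\cdots\wedge v_n=v_1\wedge W$ with $W=v_2\wedge\cdots\wedge v_n$, $|W|=\sum_{\alpha\ge2}|v_\alpha|$. Proposition~\ref{s2pro}(3) gives
\begin{align*}
[u,v_1\wedge W]=[u,v_1]\wedge W+(-1)^{(|u|-1)|v_1|}\,v_1\wedge[u,W].
\end{align*}
The first term is exactly the $i=1$ summand, whose sign exponent is the empty sum. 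I then expand $[u,W]$ by the induction hypothesis applied to $v_2,\dots,v_n$. This gives terms $(-1)^{\sum_{\alpha=2}^{i-1}|v_i||v_\alpha|}\,v_1\wedge[u,v_i]\wedge v_2\wedge\cdots\wedge\hat v_i\wedge\cdots\wedge v_n$ for $i\ge 2$.

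Next I move $[u,v_i]$ to the front past $v_1$, using graded commutativity of the wedge product. The degree of $[u,v_i]$ is $|u|+|v_i|-1$, so the swap costs $(-1)^{|v_1|(|u|+|v_i|-1)}$. Combined with the prefactor $(-1)^{(|u|-1)|v_1|}$, the $u$-dependent parts cancel mod $2$, leaving $(-1)^{|v_1||v_i|}$. This supplies exactly the missing $\alpha=1$ term in $\sum_{\alpha=1}^{i-1}|v_i||v_\alpha|$, which completes the induction.

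The main obstacle is purely a matter of sign bookkeeping. I must pin down the degree convention: $|[u,v]|=|u|+|v|-1$ with wedge degree, consistent with Proposition~\ref{s2pro}(1). I must also check that it is the parity cancellation $(|u|-1)|v_1|+|v_1|(|u|+|v_i|-1)\equiv|v_1||v_i|$ that makes the formula independent of $|u|$. If the paper intends $|u|$ to denote a shifted degree, the same computation goes through with the corresponding shift.
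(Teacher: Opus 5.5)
Your proof is correct and is essentially the paper's argument: induction on $n$ using the first identity of Proposition~\ref{s2pro}(3) together with graded commutativity of the wedge product, where the $|u|$-dependent signs cancel mod $2$. The only difference is cosmetic: the paper splits off the last factor, writing $v_1\wedge\cdots\wedge v_k=(v_1\wedge\cdots\wedge v_{k-1})\wedge v_k$, so that only the new $i=k$ term has to be reordered, whereas you split off $v_1$ and move each term coming from the induction hypothesis past $v_1$.
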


\begin{proof}
	We proceed by induction on $n$. The case $n=1$ is trivial. Assume the formula holds for $n=k-1$. For $n=k$,
	\begin{align*}
		[u, v_1\wedge\cdots\wedge v_k]
		&= [u, v_1\wedge\cdots\wedge v_{k-1}]\wedge v_k \\
		&\qquad + (-1)^{(|u|-1)\sum_{a=1}^{k-1}|v_a|}\,
		v_1\wedge\cdots\wedge v_{k-1}\wedge[u,v_k] \\[2mm]
		&= \sum_{i=1}^{k-1} (-1)^{\sum_{\alpha=1}^{i-1}|v_i||v_\alpha|}\,
		[u,v_i]\wedge v_1\wedge\cdots\wedge \hat{v}_i\wedge\cdots\wedge v_k \\
		&\qquad + (-1)^{(|u|-1)\sum_{a=1}^{k-1}|v_a|}\,
		(-1)^{(|u|+|v_k|-1)\sum_{a=1}^{k-1}|v_a|}\,
		[u,v_k]\wedge v_1\wedge\cdots\wedge v_{k-1} \\[2mm]
		&= \sum_{i=1}^{k} (-1)^{\sum_{\alpha=1}^{i-1}|v_i||v_\alpha|}\,
		[u,v_i]\wedge v_1\wedge\cdots\wedge \hat{v}_i\wedge\cdots\wedge v_k .
	\end{align*}
	The first equality follows from the graded Leibniz rule (Proposition~\ref{s2pro} identity (3)). In the second equality we apply the induction hypothesis to the first term and use graded commutativity of the wedge product to reorder the second. The last equality combines the two sums and simplifies the sign factor. This completes the induction.
\end{proof}

A key operation linking multivector fields to differential forms is the interior product. For a multivector field $v$ and a form $\alpha$, it is defined inductively by
$$\iota_{v_1 \wedge \cdots \wedge v_n} \alpha = \iota_{v_n} \cdots \iota_{v_1} \alpha,$$
and extended by linearity. The degree of $\iota_v$ is $-|v|$. The Lie derivative along a multivector field $v \in \mathfrak{X}^{\wedge\bullet}(M)$ is then defined via the graded Cartan identity:
\begin{equation}
	\mathcal{L}_v\alpha = d\iota_v\alpha - (-1)^{|v|}\iota_v d\alpha.
\end{equation}
A fundamental identity \cite{FORGER_2003} relating the Lie derivative, the interior product, and the Schouten-Nijenhuis bracket is given by
\begin{equation}\label{1e1}
	\iota_{[u,v]}\alpha = (-1)^{(|u|-1)|v|}\mathcal{L}_u\iota_v\alpha - \iota_v\mathcal{L}_u\alpha.
\end{equation}

\subsection{$L_\infty$-algebra}
An $L_\infty$-algebra \cite{KS2022,lada1994,lada1993} generalizes Lie algebras to differential graded vector spaces, replacing the Jacobi identity with a coherent system of higher homotopies.
\begin{definition}
	An $L_\infty$-algebra is a graded vector space $L$ equipped with a family of graded-skew-symmetric linear maps 
	$$\left\{l_k:L^{\otimes k}\to L|1\leq k<\infty\right\}$$
	where $|l_k| = k-2$, satisfying the generalized Jacobi identities for every $m \ge 1$:
	\begin{equation}
		\sum_{\substack{i+j=m+1,\\ \sigma\in Sh\left(i,m-i\right)}}\left(-1\right)^\sigma\epsilon\left(\sigma\right)\left(-1\right)^{i\left(j-1\right)}l_j\left(l_i\left(x_{\sigma\left(1\right)},\cdots,x_{\sigma\left(i\right)}\right),x_{\sigma\left(i+1\right)},\cdots,x_{\sigma\left(m\right)}\right) = 0.
	\end{equation}
\end{definition}

\begin{definition}
	An $L_\infty$-algebra $(L, {l_k})$ is a Lie $n$-algebra if $L$ is concentrated in degrees $0, 1, \dots, n-1$.
\end{definition}

\subsection{Simplicial sets, Kan complexes, and semi-simplicial sets}
\label{subsec:simplicial}

In Section~\ref{s4} we will assign to each $k$-dimensional submanifold a Hamiltonian $k$-form, organized into a simplicial set of observables $\bm{\mathrm{sOb}}_\bullet(M)$. The necessary background on simplicial methods is recalled below; a standard reference is \cite{f2023}.

\begin{definition}
	The simplicial category $\bm{\Delta}$ has objects $[n]=\{0,1,\dots,n\}$ ($n\ge 0$) and order-preserving maps. A simplicial set is a contravariant functor $X_\bullet:\bm{\Delta}^{\mathrm{op}}\to\bm{\mathrm{Set}}$. Thus it consists of sets $X_n$ ($n$-simplices) with face maps $d_i:X_n\to X_{n-1}$ ($0\le i\le n$) and degeneracy maps $s_i:X_n\to X_{n+1}$ ($0\le i\le n$) satisfying the simplicial identities.
\end{definition}

\begin{definition}
	A \textbf{semi-simplicial set} (also called a $\Delta$-set) is a presheaf on the subcategory $\bm{\Delta}_{\mathrm{inj}}$ of $\bm{\Delta}$ containing only injective (i.e., face) maps. Thus it has face maps $d_i$ but no degeneracies.
\end{definition}

A semi-simplicial set can be turned into a simplicial set by freely adding degeneracies; however, if the semi-simplicial set already satisfies the Kan condition (every horn has a filler), then the extension to a simplicial set is unique. More precisely:

\begin{theorem}[Unique degeneracy extension]
	\label{thm:unique-deg}
	Let $X_\bullet$ be a semi-simplicial set that is a Kan complex, i.e., every horn $\Lambda^n_k$ in $X_\bullet$ (defined using only the existing face maps) admits a filler. Then there exists a unique extension to a simplicial set $\tilde X_\bullet$ (i.e., $\tilde X_n = X_n$ and the face maps coincide) such that $\tilde X_\bullet$ is a Kan complex. Uniqueness means that the degeneracy maps are uniquely determined by the requirement that they satisfy the simplicial identities and that degenerate simplices are exactly those that are constant in some direction.
\end{theorem}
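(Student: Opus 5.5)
The plan has two parts. First I construct degeneracies inductively by horn filling, following the Rourke--Sanderson argument for Kan $\Delta$-sets. Then I obtain uniqueness from the clause of the statement that pins degenerate simplices down as those ``constant in some direction''. Throughout, $\tilde X_n=X_n$ and the face maps are unchanged; only the maps $s_i:X_n\to X_{n+1}$ have to be produced.

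\textbf{Existence.} I define $s_i$ by induction on $n$.

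For $n=0$ and $x\in X_0$, consider the horn $\Lambda^1_0$ with the single vertex $x$. Fill it to get $e\in X_1$ with $d_1e=x$. The edge $d_0e$ need not equal $x$, so I do not take $e$ directly. Instead I fill the horn $\Lambda^2_1$ with faces $(d_0,d_2)=(e,e)$ to get $t\in X_2$, and then fill a $\Lambda^2_0$ horn involving $e$ and $d_1t$. This produces an edge $s_0x$ with $d_0s_0x=d_1s_0x=x$. This is the standard loop-at-a-point construction.

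For the inductive step, assume $s_j$ has been defined on $X_{m}$ for $m<n$ and satisfies the simplicial identities. For $x\in X_n$ and each $i$, I build a horn $\Lambda^{n+1}_{i}$ whose faces are
\[
d_j=\begin{cases} s_{i-1}d_jx & \text{if } j<i,\\ x & \text{if } j=i+1,\\ s_i d_{j-1}x & \text{if } j>i+1.\end{cases}
\]
The compatibility of these faces is exactly the simplicial identities one dimension down, so this is a genuine horn.

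I then fill it to get $y\in X_{n+1}$. The missing face $d_iy$ must be corrected to equal $x$. To do this, I fill a further horn $\Lambda^{n+2}$ whose faces are copies of $y$ together with degeneracies of faces of $x$, exactly as in the $n=0$ case, and replace $y$ by a suitable face of that filler.

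The identities $s_js_i=s_{i+1}s_j$ for $j\le i$ are built in by defining $s_i$ for $i$ in increasing order and using already-chosen degenerate simplices as the faces of the horns. The Kan property of the resulting $\tilde X_\bullet$ is then immediate, since horns in $\tilde X_\bullet$ are horns in $X_\bullet$.

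\textbf{Uniqueness.} Suppose $s$ and $s'$ are two degeneracy systems, both satisfying the simplicial identities, such that degenerate simplices are exactly the simplices constant in some direction. For $x\in X_n$, both $s_ix$ and $s'_ix$ satisfy the same constraints: $d_iy=d_{i+1}y=x$, and by induction $d_jy$ agrees for all other $j$. Each is constant in direction $i$.

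I read ``constant in direction $i$'' as the intrinsic condition that $y$ is the image of $x$ under the codegeneracy $\sigma^i:\Delta^{n+1}\to\Delta^n$. In the concrete models of Section~\ref{s4} this means precomposition with $\sigma^i$. Under that reading there is exactly one such simplex over $x$, which forces $s_ix=s'_ix$.

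\textbf{Main obstacle.} The hard step is making the uniqueness clause carry real content. For an arbitrary Kan $\Delta$-set, fillers are not unique, so different choices yield degeneracies that are only homotopic. Strict uniqueness therefore rests entirely on treating ``constant in some direction'' as an intrinsic datum of $X_\bullet$ rather than as a consequence of the chosen $s_i$. If that reading were not admissible, I would fall back to proving uniqueness up to simplicial isomorphism fixing $X_\bullet$.
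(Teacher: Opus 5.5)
\textbf{Existence.} Your existence sketch skips exactly the step that makes the Rourke--Sanderson theorem nontrivial (the paper does not prove the theorem; it only cites that result). First, the $n=0$ horn is ill-formed. A $\Lambda^2_1$-horn with $(d_0,d_2)=(e,e)$ requires $d_0e=d_1e$ (from $d_0d_2=d_1d_0$), which is what you are trying to produce. The right move is the $\Lambda^2_2$-horn with $(d_0,d_1)=(e,e)$, whose missing face $\ell=d_2t$ is already a loop at $x$.

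More seriously, $s_ix$ has \emph{all} $n+2$ faces prescribed, so constructing it is a sphere-filling problem, and a horn supplies only $n+1$ of them. Filling a $\Lambda^{n+2}_{i+2}$-horn with $y$ in positions $i,i+1$ does give a simplex with $d_i=d_{i+1}=x$. But for $n\ge 1$ this horn has further faces $d_jw$ ($j<i$ or $j>i+2$). For the horn to be compatible and the output to have the prescribed remaining faces, $d_jw$ must play the role of $s_{i-1}s_{i-1}d_jx$ (resp.\ $s_is_id_{j-2}x$). That is an $(n+1)$-simplex with three consecutive faces equal to a degenerate $n$-simplex, i.e.\ a value on degenerate simplices of the very maps $X_n\to X_{n+1}$ you are constructing. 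Nothing in the $n=0$ case corresponds to this.

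Similarly, $s_js_i=s_{i+1}s_j$ is not ``built in'' by ordering the indices. It forces one simplex to serve as two degeneracies, so its whole boundary is prescribed. Already $s_0s_0x=s_1s_0x$ needs a $2$-simplex with all three faces equal to $\ell$. One exists (fill a $\Lambda^3_3$-horn with three copies of $t$), but that is an argument you must give and then systematize in every dimension. The missing ingredient is a sphere-filling lemma for Kan $\Delta$-sets: boundaries of this `degenerate' shape extend over the simplex. Proving it requires developing homotopy rel boundary without degeneracies, which is the real content of the cited existence theorem.

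\textbf{Uniqueness.} Your suspicion here is correct, and the problem is worse than an interpretive issue. Reading `constant in direction $i$' as `image of $x$ under $\sigma^i$' presupposes that codegeneracies already act on $X_\bullet$. For an abstract semi-simplicial set the argument is therefore circular, and strict uniqueness is in fact false.

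Here is a counterexample. On the semi-simplicial set underlying $\mathrm{Sing}_\bullet(Y)$, choose homeomorphisms $h_k$ of $\Delta^k$ commuting with all face inclusions: $h_0=\mathrm{id}$, $h_1$ a nontrivial reparametrization fixing the endpoints, and for $k\ge2$, $h_k$ given on $\partial\Delta^k$ by $h_{k-1}$ and coned off. Then $\Phi(\sigma)=\sigma\circ h_k$ is a semi-simplicial automorphism, and $s_i'=\Phi^{-1}s_i\Phi$ is a second Kan simplicial structure with the same face maps. It differs from the standard one: $s_0'\gamma=\gamma\circ h_1\circ\sigma^0\circ h_2^{-1}\neq\gamma\circ\sigma^0$ for an embedded path $\gamma$. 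These degenerate simplices are still constant along a family of arcs, so no geometric reading of `constant in some direction' singles out one structure either.

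Your fallback, uniqueness up to an isomorphism fixing $X_\bullet$, is not a weakening: an isomorphism that is the identity on each $X_k$ forces $s=s'$. What is true in general is uniqueness up to homotopy; for instance, the realizations of any two such structures are homotopy equivalent to the fat realization $\|X_\bullet\|$. That is the form in which the uniqueness clause can be proved or cited.
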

This result is classical; the existence was proved in \cite{Rou1971}, and the uniqueness (in the sense above) was established in \cite{McC2013}.

\begin{definition}
	For a topological space $X$, the singular simplicial set $\bm{\mathrm{Sing}}_\bullet(X)$ has $\bm{\mathrm{Sing}}_n(X)=\mathrm{Hom}_{\bm{\mathrm{Top}}}(\Delta^n,X)$. For smooth manifolds, $\bm{\mathrm{Sing}}^\infty_\bullet(X)$ is defined analogously using smooth maps $\Delta^n\to X$. Both are Kan complexes.
\end{definition}

In our construction (Section~\ref{s4}), the observables $\bm{\mathrm{sOb}}_\bullet(M)$ will be defined first as a semi-simplicial set via face maps that restrict Hamiltonian forms to boundaries. Once we prove it satisfies the Kan filling property (Theorem~\ref{s4t1}), Theorem~\ref{thm:unique-deg} guarantees a unique extension to a simplicial set, and the resulting degeneracy maps have a natural geometric interpretation: they insert a direction along which the Hamiltonian form is pulled back from a lower-dimensional submanifold, corresponding to a `constant' observable in that direction.

\subsection{$n$-plectic geometry and observables}\label{s2.4}

A multisymplectic (or $n$-plectic) manifold generalizes a symplectic manifold: while symplectic geometry describes point particles, $n$-plectic geometry governs $(n-1)$-dimensional extended objects~\cite{Baez2010ya,Baez2005qu}. Rogers and others established that Hamiltonian $(n-1)$-forms on an $n$-plectic manifold form an $L_\infty$-algebra~\cite{Rogers_2011,Baez2009}. Here we extend this to Hamiltonian forms of all degrees.

\begin{definition}
	An $(n+1)$-form $\omega$ on a smooth manifold $M$ is $n$-plectic if it is closed and nondegenerate, i.e., $\iota_v\omega = 0 \Rightarrow v = 0$ for every $v \in T_xM$. The pair $(M,\omega)$ is an $n$-plectic manifold.
\end{definition}

The $n$-plectic structure induces bundle maps $T^k: \mathfrak{X}^k(M) \to \Omega^{n-k+1}(M)$, $v \mapsto \iota_v\omega$, leading to the notion of Hamiltonian observables.

\begin{definition}
	Let $(M,\omega)$ be $n$-plectic. An $(n-k)$-form $\alpha \in \Omega^{n-k}(M)$ is \emph{Hamiltonian} if there exists $v_\alpha \in \mathfrak{X}^k(M)$ with $d\alpha = -\iota_{v_\alpha}\omega$. The Poisson brace is $\{\alpha,\beta\} = (-1)^{|v_\beta|}\,\iota_{v_\beta}\iota_{v_\alpha}\omega$. The spaces of such forms and multivector fields are denoted $\Omega^{n-k}_{\mathrm{Ham}}(M)$ and $\mathfrak{X}^k_{\mathrm{Ham}}(M)$.
\end{definition}

Hamiltonian multivector fields satisfy $\mathcal{L}_v\omega = 0$. A Hamiltonian form $\alpha$ determines its associated multivector field $v_\alpha$ only up to elements of $\operatorname{Ker}(T^k)$, while $v_\alpha$ determines $\alpha$ only up to closed forms. This ambiguity reflects a gauge freedom inherent in the choice of observables. Moreover, the Hamiltonian condition $d\alpha = -\iota_v\omega$ may admit solutions only locally on contractible open subsets. We therefore introduce the following notion, which will play a central role in the geometric constructions of Section~\ref{s4}.

\begin{definition}\label{localob}
	A multivector field $v \in \mathfrak{X}^k(M)$ is $n$-plectic if $\mathcal{L}_v\omega = 0$. For such $v$, on any contractible open subset $U \subset M$ there exists a local $(n-k)$-form $\alpha_U \in \Omega^{n-k}(U)$ with $d\alpha_U = -\iota_v\omega|_U$, called a \emph{local observable} associated with $v$. On overlaps, local observables differ by closed forms, and their equivalence classes in local de Rham cohomology encode topological charges~\cite{FD2000,SS2024,SS2023}.
\end{definition}

\paragraph{Working assumptions.}
For the constructions that follow we make two assumptions that guarantee well-definedness.
\begin{enumerate}
	\item \textbf{Topological triviality (Section~\ref{s3}).} The $n$-plectic manifold $(M,\omega)$ is topologically trivial, i.e.\ its de Rham cohomology vanishes in positive degrees. This provides a global primitive $\theta$ with $d\theta = \omega$ and ensures that all Hamiltonian forms are globally defined, simplifying the algebraic description of the $L_\infty$-algebra. The general case is treated by gluing local data.
	\item \textbf{Hamiltonian translations (Sections~\ref{s4}--\ref{s5}).} 
	The infinitesimal generators of worldvolume translations are required to be Hamiltonian vector fields. Physically, the \(n\)-dimensional worldvolume of an extended object carries a natural action of the translation group, and the corresponding conserved Noether charges (energy-momentum) must be represented as genuine observables in the Hamiltonian algebra \(\Omega_{\mathrm{Ham}}^\bullet(M)\). Mathematically, this assumption guarantees that the auxiliary vector fields in the definition of \(\bm{\mathrm{sOb}}_\bullet(M)\) are Hamiltonian; their pairwise commutativity then follows from the fact that they are the images of commuting coordinate vector fields under the tangent map of \(\tau_k\). This commutativity is the essential ingredient in the proof of the Kan property (Theorem~\ref{s4t1}), where the filler is obtained from the top face simply by deleting the generating vector field that corresponds to the inward normal direction. No global topological triviality of \(M\) is required; all constructions rely only on the local Poincar\'e lemma on contractible open sets.
\end{enumerate}
Both assumptions are natural in the context of field theory on extended objects: topological non-triviality can be treated by gluing local data, while the Hamiltonian nature of translations is physically necessary for a consistent momentum observable. We leave the general case for future work.

\section{$L_\infty$-algebra of observable algebras}\label{s3}
In this section we will extend Rogers' work to any degree of Hamiltionian forms and explore the algebraic structure defined by generalized $n$-Poisson brackets. We begin by recalling Rogers' construction of an $L_\infty$-algebra on an $n$-plectic manifold $(M, \omega)$ \cite{Rogers_2011}.
\begin{theorem}[\cite{Rogers_2011}]\label{thm:rogers}
	Given a  $n$-plectic manifold $(M,\omega)$, there exists an $L_\infty$-algebra $(L, {l_k})$ with underlying graded vector space is
	\begin{equation}
		L_i=
		\begin{cases}
			\Omega^{n-1}_{\mathrm{Ham}}\left(M\right)\quad &i=0,\\
			\Omega^{n-i-1}\left(M\right)\quad &0<i\leq n-1,\\
		\end{cases}\nonumber 
	\end{equation}
	and whose structure maps are defined as follows:
	\begin{itemize}
		\item The differential $l_1: L \to L$ is given by
		$$l_1\left(\alpha\right)=d\alpha $$
		if $|\alpha|>0$
		\item All higher maps $\{l_k: L^{\otimes k} \to L \mid 2 \leq k < \infty\}$ are constructed from  the multilinear bracket
		\begin{equation}
			l_k\left(\alpha_1,\cdots,\alpha_k\right)=
			\begin{cases}
				0&|\bigotimes_{i=1}^k\alpha_i|>0;\\
				\left(-1\right)^{\frac{k}{2}+1}\iota_{v_{\alpha_1}\wedge\cdots\wedge v_{\alpha_k}}\omega&|\bigotimes_{i=1}^k\alpha_i|=0, k\,\mathrm{even};\\
				\left(-1\right)^{\frac{k-1}{2}}\iota_{v_{\alpha_1}\wedge\cdots\wedge v_{\alpha_k}}\omega&|\bigotimes_{i=1}^k\alpha_i|=0, k\,\mathrm{odd}.\nonumber\\
			\end{cases}
		\end{equation}
		for $k>1$, where $v_{\alpha_i}$ is the unique Hamiltonian vector field associated to $\alpha_i\in\Omega^{n-1}_{\mathrm{Ham}}\left(M\right)$
	\end{itemize}
\end{theorem}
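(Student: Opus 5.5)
Since the brackets are fixed, the proof is a direct check of the generalized Jacobi identities for each $m\ge 1$. First I confirm the structure maps are well defined. Each $l_k$ has degree $k-2$: for degree-zero inputs, $\iota_{v_{\alpha_1}\wedge\cdots\wedge v_{\alpha_k}}\omega$ is an $(n+1-k)$-form, which sits in $L_{k-2}$ because $L_i=\Omega^{n-1-i}$. Graded skew-symmetry holds because all arguments have degree $0$, so $\epsilon(\sigma)=1$ and the wedge of vector fields produces $(-1)^\sigma$. Nondegeneracy of $\omega$ makes $v_\alpha$ unique, so the brackets do not depend on choices. Two further points need checking. For $k=2$, $l_2(\alpha_1,\alpha_2)$ must lie in $\Omega^{n-1}_{\mathrm{Ham}}$: by (\ref{1e1}) together with $\mathcal{L}_{v}\omega=0$ and $d\omega=0$, we get $d\iota_{v_1\wedge v_2}\omega=\pm\iota_{[v_1,v_2]}\omega$, so $l_2$ is Hamiltonian with vector field $\pm[v_{\alpha_1},v_{\alpha_2}]$. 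We also need $l_1\circ l_1=d^2=0$, and must specify $l_1$ on $L_0$: it is $d$ on positive degrees and is taken to be zero from $L_0$.

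The central computation is a formula for the de Rham differential of $\iota_{v_1\wedge\cdots\wedge v_m}\omega$ when every $v_i$ is Hamiltonian ($\mathcal{L}_{v_i}\omega=0$). I would prove by induction on $m$, using (\ref{1e1}) and Cartan's formula, that
\begin{equation*}
d\,\iota_{v_1\wedge\cdots\wedge v_m}\omega=(-1)^m\sum_{1\le i<j\le m}(-1)^{i+j}\,\iota_{[v_i,v_j]\wedge v_1\wedge\cdots\widehat{v_i}\cdots\widehat{v_j}\cdots\wedge v_m}\omega .
\end{equation*}
Lemma~\ref{s2l1} handles the brackets of a vector field with a wedge product that appear in the inductive step. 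This is the multisymplectic analogue of the Chevalley--Eilenberg differential identity.

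Next I sort the terms of the $m$-th Jacobi identity. The inner bracket $l_i(\ldots)$ is nonzero only if all of its inputs have degree $0$, and the outer bracket $l_j$ is nonzero only if the total degree of its inputs is $0$ or $j=1$. Take all $x_k$ of degree $0$ (otherwise almost every term vanishes, and the remaining terms with $l_1$ give $d^2=0$ or trivial identities). Then only two kinds of term survive: $l_1(l_m(x_1,\ldots,x_m))$, which equals $d$ of an $(n+1-m)$-form when $m\ge 2$, and $l_{m-1}(l_2(x_a,x_b),x_{\ldots})$, where $l_2(x_a,x_b)$ has degree $0$ with Hamiltonian vector field $\pm[v_a,v_b]$. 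Terms $l_j(l_i,\ldots)$ with $i\ge 3$ vanish because $l_i$ then has positive degree and $j\ge 2$. Substituting the displayed formula makes the $d\,l_m$ term cancel the sum over unshuffles of the $l_{m-1}(l_2,\cdots)$ terms.

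I expect the main obstacle to be sign bookkeeping. This includes matching the normalizations $(-1)^{k/2+1}$ and $(-1)^{(k-1)/2}$ together with the factor $(-1)^{i(j-1)}$ against the $(-1)^{i+j}$ signs from the differential formula, and pinning down the exact sign of $v_{l_2(\alpha,\beta)}$. My approach is to check $m=3$ explicitly, where the identity reads $l_1 l_3 +$ the Jacobiator of $l_2 = 0$. I would then track the parity of the sign as a function of $m$ by splitting into $m$ even and $m$ odd. The $(-1)^{\lfloor\cdot\rfloor}$ normalizations are designed precisely to absorb the quadratic sign $(-1)^{m(m-1)/2}$ that comes from reordering the wedge.
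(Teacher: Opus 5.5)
Your proposal is correct and is essentially Rogers' original argument. The paper gives no separate proof of this cited statement, but it follows the same route for its generalization: Lemma~\ref{s3l1} computes $d\iota_{v_1\wedge\cdots\wedge v_m}\omega$, and the proof of Theorem~\ref{thmmain} reduces the $m$-th Jacobi identity to $d\,l_m$ against the $\mathrm{Sh}(2,m-2)$-sum of $l_{m-1}(l_2(\cdot,\cdot),\dots)$ terms. Two small fixes are needed: the $d\,l_m$ term appears only for $m\ge 3$ (for $m=2$ every term vanishes because $l_1|_{L_0}=0$), and the mixed-degree terms $l_m(l_1(x_a),\dots)$ with $|x_a|=1$ vanish because the Hamiltonian vector field of the exact form $dx_a$ is zero by nondegeneracy. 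The sign check you anticipate closes in one line, using $v_{l_2(x_a,x_b)}=[v_{x_a},v_{x_b}]$, $(-1)^\sigma=(-1)^{a+b+1}$, and $\varsigma(m)=(-1)^m\varsigma(m-1)$ for the normalization $\varsigma(k)=-(-1)^{k(k+1)/2}$.
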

Theorem \ref{thm:rogers} applies only to Hamiltonian $(n-1)$-forms. To extend this structure to Hamiltonian forms of arbitrary degree, we introduce a degree-shifting mechanism via a formal Grassmann variable with differential degree $-1$, which maps any Hamiltonian $(n-k-1)$-form in $L_k$ to a Hamiltonian $(n-k)$-form in $L_{k-1}$.

\subsection{Extended phase space} As discussed in Section~\ref{s2.4}, the $n$-bracket introduced in Theorem~\ref{thm:rogers} generalizes the classical Poisson bracket of symplectic geometry. In the present section, this bracket is extended to arbitrary Hamiltonian forms. The following lemma establishes that the $n$-bracket of Hamiltonian forms is again a Hamiltonian form. Consequently, the $n$-brackets serve a dual purpose: they define the algebraic structure on the space of Hamiltonian forms, and simultaneously provide the higher homotopy-coherent data of the associated $L_\infty$-algebra. It is therefore essential to introduce a mechanism that shifts all Hamiltonian forms to degree $0$, thereby cleanly separating these two roles: all Hamiltonian forms reside in degree $0$, while the higher degrees encode the higher coherent data.
\begin{lemma}\label{s3l1}
	Given a $n$-plectic manifold $\left(M,\omega\right)$ and $v_1,\cdots,v_m\in\mathfrak{X}^{\bullet}_{\mathrm{Ham}}\left(M\right)$ with $m\geq2$
	\begin{align}
		d\iota_{v_1\wedge\cdots\wedge v_m}\omega=&\sum_{1\leq i<j\leq m}\left(-1\right)^{\sum_{\alpha=1}^{m-j}|v_{j+\alpha}|+\sum_{\alpha=1}^{j-1}\left(|v_j|-1\right)|v_\alpha|+\sum_{\alpha=1}^{i-1}|v_i||v_\alpha|}\iota\left(\left[v_{j},v_i\right]\right.\nonumber\\
		&\qquad\left.\wedge v_1\wedge\cdots\wedge \hat{v}_i\wedge\cdots \wedge \hat{v}_{j}\wedge\cdots\wedge v_{m}\right)\omega
	\end{align}
\end{lemma}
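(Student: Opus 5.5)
We argue by induction on $m$, in the spirit of Rogers' computation for vector fields, now with multivector fields of arbitrary degree. Three ingredients drive the argument. The graded Cartan formula gives $\mathcal{L}_v\alpha = d\iota_v\alpha - (-1)^{|v|}\iota_v d\alpha$. Identity \eqref{1e1} expresses $\iota_{[u,v]}$ through $\mathcal{L}_u\iota_v$ and $\iota_v\mathcal{L}_u$. The Hamiltonian hypothesis gives $\mathcal{L}_{v_i}\omega=0$ for every $i$, and $d\omega=0$.

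\textbf{Base case $m=2$.} Since $\iota_{v_1\wedge v_2}\omega=\iota_{v_2}\iota_{v_1}\omega$, apply \eqref{1e1} with $u=v_2$, $v=v_1$ and $\alpha=\omega$. The term $\iota_{v_1}\mathcal{L}_{v_2}\omega$ vanishes, so $\iota_{[v_2,v_1]}\omega = \pm\mathcal{L}_{v_2}\iota_{v_1}\omega$. Expanding $\mathcal{L}_{v_2}$ by Cartan gives $d\iota_{v_2}\iota_{v_1}\omega \mp \iota_{v_2}d\iota_{v_1}\omega$. The second term is, again by Cartan and $\mathcal{L}_{v_1}\omega=0$, $d\omega=0$, equal to $\iota_{v_2}\mathcal{L}_{v_1}\omega=0$. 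Hence $d\iota_{v_1\wedge v_2}\omega=\pm\iota_{[v_2,v_1]}\omega$, and we check that the sign is $(-1)^{(|v_2|-1)|v_1|}$, matching the formula with $i=1$, $j=2$.

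\textbf{Inductive step.} Write $\iota_{v_1\wedge\cdots\wedge v_m}\omega=\iota_{v_m}\beta$ with $\beta=\iota_{v_1\wedge\cdots\wedge v_{m-1}}\omega$. Cartan gives
\[
d\iota_{v_m}\beta=\mathcal{L}_{v_m}\beta+(-1)^{|v_m|}\iota_{v_m}d\beta.
\]
The term $\iota_{v_m}d\beta$ is handled by the induction hypothesis. Since $\iota_{v_m}\iota_{w}=\iota_{w\wedge v_m}$, it produces exactly the summands with $j\le m-1$, each carrying an extra factor $(-1)^{|v_m|}$. This accounts for the contribution $\sum_{\alpha=1}^{m-j}|v_{j+\alpha}|$ in the exponent, which grows by $|v_m|$ when passing from $m-1$ to $m$.

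For $\mathcal{L}_{v_m}\beta$, we use \eqref{1e1} with $\mathcal{L}_{v_m}\omega=0$ to get $\mathcal{L}_{v_m}\iota_{v_1\wedge\cdots\wedge v_{m-1}}\omega=\pm\iota_{[v_m,v_1\wedge\cdots\wedge v_{m-1}]}\omega$. We then expand the bracket with Lemma~\ref{s2l1}, which yields $\sum_i(-1)^{\sum_{\alpha<i}|v_i||v_\alpha|}[v_m,v_i]\wedge v_1\wedge\cdots\hat v_i\cdots\wedge v_{m-1}$. These are exactly the summands with $j=m$. Their prefactor from \eqref{1e1} is $(-1)^{(|v_m|-1)\sum_{\alpha<m}|v_\alpha|}$, which matches the middle term of the exponent, and the first sum in the exponent is empty when $j=m$.

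\textbf{Main obstacle.} The structure of the argument is routine; the real work is sign bookkeeping. One must check that the $(-1)^{|v_m|}$ from Cartan, the Koszul signs in Lemma~\ref{s2l1}, and the reordering convention $\iota_{v_1\wedge\cdots\wedge v_n}=\iota_{v_n}\cdots\iota_{v_1}$ combine into precisely the stated exponent. The degree of $\iota_v$ being $-|v|$ also affects how signs pass through $d$. I would first verify the base case signs carefully with vector fields ($|v_i|=1$), where the formula reduces to Rogers' lemma, and then track parities term by term in the inductive step.
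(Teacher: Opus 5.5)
Your proposal is correct and follows essentially the same route as the paper: induction on $m$, with the Cartan formula splitting $d\iota_{v_m}\iota_{v_1\wedge\cdots\wedge v_{m-1}}\omega$ into a Lie-derivative term and a term $(-1)^{|v_m|}\iota_{v_m}d\iota_{v_1\wedge\cdots\wedge v_{m-1}}\omega$. The Lie-derivative term is turned into $\iota_{[v_m,v_1\wedge\cdots\wedge v_{m-1}]}\omega$ via \eqref{1e1} and $\mathcal{L}_{v_m}\omega=0$, then expanded by Lemma~\ref{s2l1} to give the $j=m$ summands; the second term is handled by the induction hypothesis. Your sign attributions also match the paper's computation: $(-1)^{(|v_m|-1)\sum_{\alpha<m}|v_\alpha|}$ for the $j=m$ terms, and the extra $|v_m|$ that extends $\sum_{\alpha=1}^{m-j}|v_{j+\alpha}|$ for $j<m$.
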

\begin{proof}
	The lemma can be proved by mathematical induction, following an argument similar to that in \cite{Rogers_2011}. For the base case \( m = 2 \),
	\begin{align}\label{s3e1}
		d\iota_{v_1\wedge v_2}\omega&=d\iota_{v_2}\iota_{v_1}\omega\nonumber\\
		&=\mathcal{L}_{v_2}\iota_{v_1}\omega+\left(-1\right)^{\left|v_2\right|}\iota_{v_2}d\iota_{v_1}\omega\nonumber\\
		&=\mathcal{L}_{v_2}\iota_{v_1}\omega\nonumber\\
		&=\left(-1\right)^{\left(\left|v_2\right|-1\right)\left|v_1\right|}\left(\iota_{\left[v_2,v_1\right]}\omega+\iota_{v_1}\mathcal{L}_{v_2}\omega\right)\nonumber\\
		&=\left(-1\right)^{\left(\left|v_2\right|-1\right)\left|v_1\right|}\iota_{\left[v_2,v_1\right]}\omega
	\end{align}
	The first equality is just the definition, the second applies the Cartan identity for the Lie derivative. The third follows from the property of the Hamiltonian vector field $v_2$, The forth equality uses the fundamental identity \eqref{1e1} relating the Lie derivative to the Schouten-Nijenhuis bracket. Finally, the last equality holds because the Hamiltonian multivector field $v_1$ preserves the multisymplectic structure, i.e., $\mathcal{L}_{v_1} \omega = 0$. 
	
	Carrying out the analogous computation for the inductive step gives
	\begin{align}\label{s3e2}
		d\iota_{v_1\wedge\cdots\wedge v_k}\omega
		=\left(-1\right)^{\left(|v_k|-1\right)\sum_{a=1}^{k-1}|v_a|}\iota_{\left[v_k,v_1\wedge\cdots\wedge v_{k-1}\right]}\omega+\left(-1\right)^{|v_k|}\iota_{v_k}d\iota_{v_1\wedge\cdots\wedge v_{k-1}}\omega
	\end{align}
	
	Applying Lemma~\ref{s2l1}, the Schouten-Nijenhuis bracket appearing above expands as
	\begin{align*}
		\left[v_k,v_1\wedge\cdots\wedge v_{k-1}\right]=\sum_{i=1}^{k-1}\left(-1\right)^{\sum_{\alpha=1}^{i-1}|v_i||v_\alpha|}\left[v_k,v_i\right]\wedge v_1\wedge\cdots\hat{v}_i\wedge\cdots \wedge v_{k-1}
	\end{align*}  
	Substituting this expression back, we obtain
	\begin{align}\label{s3e}
		d\iota_{v_1\wedge\cdots\wedge v_k}\omega
		=&\sum_{i=1}^{k-1}\left(-1\right)^{\sum_{\alpha=1}^{k-1}\left(|v_k|-1\right)|v_\alpha|+\sum_{\alpha=1}^{i-1}|v_i||v_\alpha|}\iota\!\left(\left[v_k,v_i\right]\wedge v_1\wedge\!\cdots\!\wedge \hat{v}_i\wedge\!\cdots\! \wedge v_{k-1}\right)\omega\nonumber\\
		&\qquad+\left(-1\right)^{|v_k|}\iota_{v_k}d\iota_{v_1\wedge\cdots\wedge v_{k-1}}\omega
	\end{align}
	By the induction hypothesis,
	\begin{align*}
		d\iota_{v_1\wedge\cdots\wedge v_{k-1}}\omega
		=&\sum_{1\leq i<j\leq k-1}\left(-1\right)^{\sum_{\alpha=1}^{k-j-1}|v_{j+\alpha}|+\sum_{\alpha=1}^{j-1}\left(|v_j|-1\right)|v_\alpha|+\sum_{\alpha=1}^{i-1}|v_i||v_\alpha|}\iota\left(\left[v_{j},v_i\right]\right.\\
		&\qquad\left.\wedge v_1\wedge\cdots\wedge \hat{v}_i\wedge\cdots \wedge \hat{v}_{j}\wedge\cdots\wedge v_{k-1}\right)\omega
	\end{align*}
	Combining the two formulas above yields, for \( m = k \),
	\begin{align*}
		&d\iota_{v_1\wedge\cdots\wedge v_k}\omega\\
		=&\sum_{i=1}^{k-1}\left(-1\right)^{\sum_{\alpha=1}^{k-1}\left(|v_k|-1\right)|v_\alpha|+\sum_{\alpha=1}^{i-1}|v_i||v_\alpha|}\iota\left(\left[v_k,v_i\right]\wedge v_1\wedge\cdots\wedge \hat{v}_i\wedge\cdots \wedge v_{k-1}\right)\omega\\
		&+\left(-1\right)^{|v_k|}\iota_{v_k}\sum_{1\leq i<j\leq k-1}\left(-1\right)^{\sum_{\alpha=1}^{k-j-1}|v_{j+\alpha}|+\sum_{\alpha=1}^{j-1}\left(|v_j|-1\right)|v_\alpha|+\sum_{\alpha=1}^{i-1}|v_i||v_\alpha|}\\
		&\qquad\iota\left(\left[v_{j},v_i\right]\wedge v_1\wedge\cdots\wedge \hat{v}_i\wedge\cdots \wedge \hat{v}_{j}\wedge\cdots\wedge v_{k-1}\right)\omega\\
		=&\sum_{i=1}^{k-1}\left(-1\right)^{\sum_{\alpha=1}^{k-1}\left(|v_k|-1\right)|v_\alpha|+\sum_{\alpha=1}^{i-1}|v_i||v_\alpha|}\iota\left(\left[v_k,v_i\right]\wedge v_1\wedge\cdots\wedge \hat{v}_i\wedge\cdots \wedge \hat{v}_{k}\right)\omega\\
		&+\sum_{1\leq i<j\leq k-1}\left(-1\right)^{\sum_{\alpha=1}^{k-j}|v_{j+\alpha}|+\sum_{\alpha=1}^{j-1}\left(|v_j|-1\right)|v_\alpha|+\sum_{\alpha=1}^{i-1}|v_i||v_\alpha|}\\
		&\iota\left(\left[v_{j},v_i\right]\wedge v_1\wedge\cdots\wedge \hat{v}_i\wedge\cdots \wedge \hat{v}_{j}\wedge\cdots\wedge v_{k}\right)\omega\\
		=&\sum_{1\leq i<j\leq k}\left(-1\right)^{\sum_{\alpha=1}^{k-j}|v_{j+\alpha}|+\sum_{\alpha=1}^{j-1}\left(|v_j|-1\right)|v_\alpha|+\sum_{\alpha=1}^{i-1}|v_i||v_\alpha|}\iota\left(\left[v_{j},v_i\right]\right.\\
		&\qquad\left.\wedge v_1\wedge\cdots\wedge \hat{v}_i\wedge\cdots \wedge \hat{v}_{j}\wedge\cdots\wedge v_{k}\right)\omega.
	\end{align*}
	This completes the inductive step and the proof.
\end{proof}

The degree shift is realized by introducing a formal variable $u$ of bidegree $(-1,1)$ and imposing the condition $du=0$. The first component of this bidegree lowers the form degree, mapping any Hamiltonian form into the degree-zero component 
$L_0$ in Theorem~\ref{thm:rogers}; the second component is a newly added grading that marks the codimension. As a result, the originally graded vector space acquires a bigraded structure, where the extra grading precisely records the codimension information.

Concretely, we extend the original graded space to a bigraded one by assigning to each element of $\Omega^{n-i-1}(M)$ the bidegree $(i,0)$. Denote by $\deg_1$ and $\deg_2$ the first and second degrees of a homogeneous element, and let $|\cdot| = \deg_1 + \deg_2$ be the total degree. The bigraded extension of the vector space $L_{\bullet,\bullet}$ is then defined as
$$
L_{i,j}=
\begin{cases}
	\Omega_{\mathrm{Ham}}^{\,n-j-1}(M)\langle u^{\,j}\rangle, & i = 0,\\[5pt]
	\Omega^{\,n-i-j-1}(M)\langle u^{\,j}\rangle, & i \geq 1,
\end{cases}
$$
where $\Omega(M)\langle u^{\,j}\rangle$ denotes the space of degree-$j$ monomials in $u$ with coefficients in $\Omega(M)$ (equivalently, the free module generated by $u^{\,j}$ over $\Omega(M)$). All Hamiltonian forms are encoded in the subspace $\bigoplus_{j=0}^{n-1} L_{0,j}$; hence the extended phase space is precisely this direct sum, i.e., 

$$\bigoplus_{j=0}^{n-1}L_{0,j}=\bigoplus_{j=0}^{n-1}\Omega_{\mathrm{Ham}}^{\,n-j-1}(M)\langle u^{\,j}\rangle.$$

As discussed in Section~\ref{s1}, an $n$-plectic manifold describes the dynamics of an $(n-1)$-dimensional extended object. In this picture, the exterior derivative of a $k$-form observable on the $n$-plectic geometry can be interpreted as the $k$-plectic structure of a field theory defined on a $k$-dimensional submanifold of the object's world volume. Therefore, the variable $u$ is not merely a formal device---it admits a direct physical interpretation as a fermionic ghost field that generates infinitesimal transversal displacements of the submanifold inside the world volume. Moreover, the exponent of $u$ naturally encodes the codimension of that submanifold relative to the full world volume.

The formal variable $u$ can likewise be used to shift the degree of multivector fields: an element of $\mathfrak{X}^j\left(M\right)$ is moved to total degree $1$ after tensoring with an appropriate power of $u$. We define the extended space of bigraded multivector fields by 
$$\mathfrak{X}^{i,j}\left(M\right):=\mathfrak{X}^{i+j}\left(M\right)\langle u^j\rangle.$$
Let $v_1u^j\in\mathfrak{X}^{i,j}$ and $v_2u^l\in\mathfrak{X}^{i,l}$, the bracket on $\mathfrak{X}^{\bullet,\bullet}(M)$ is extended from bracket on $\mathfrak{X}^{\bullet}(M)$ by following relation 
$$\left[v_1u^j,v_2u^l\right]=\left[v_1,v_2\right]u^{j+l}.$$
For $v_i \in \mathfrak{X}^{i}(M)$,
$$\sum_{i=1}^{n} v_{i}\,u^{\,i-1}\;\in\;\bigoplus_{j=0}^{n-1}\mathfrak{X}^{\,j+1}(M)\langle u^{\,j}\rangle$$ 
The interior product of such an element with the \(n\)-plectic form \(\omega\) is defined by
$$\iota_{\left(\sum_{i=1}^{n} v_{i}u^{i-1}\right)}\omega \;=\; \sum_{i=1}^{n} u^{\,i-1}\,\iota_{v_{i}}\omega .$$
A vector field $v \in \bigoplus_{j=0}^{n-1}\mathfrak{X}^{\,j+1}(M)\langle u^{\,j}\rangle$  is called Hamiltonian if there exists a form $\alpha \in \bigoplus_{j=0}^{n-1} \Omega^{\,n-j-1}(M)\langle u^{\,j}\rangle$ such that
$$\iota_v\omega = -\,d\alpha .$$
The space of Hamiltonian vector fields is denoted by $\bigoplus_{j=0}^{n-1}\mathfrak{X}^{1,j}_{\mathrm{Ham}}(M)$, and the corresponding space of Hamiltonian forms by $\bigoplus_{j=0}^{n-1}\Omega^{n-j-1}_{\mathrm{Ham}}(M)$.

For $k \ge 2$ the $k$-bracket of Hamiltonian forms is defined as
$$l_k(\alpha_1,\dots ,\alpha_k)= (-1)^{\sum_{i=1}^{k}(i-1)(|\alpha_i|+1)}\,\iota_{v_{\alpha_1}\wedge\cdots\wedge v_{\alpha_k}}\omega ,$$
where $v_{\alpha_i}$ is the Hamiltonian vector field associated with $\alpha_i$. For $\alpha_i \in\Omega_{\mathrm{Ham}}^{\,n-i-1}(M)$ with Hamiltonian vector field $v_i \in \mathfrak{X}^{\,i+1}(M)$, the element $\alpha_i u^{\,i}$ belongs to $L_{0,i}$, and the definition of $k$-brackets on $\bigoplus_{j=0}^{n-1} \Omega_{\mathrm{Ham}}^{\,n-j-1}(M)\langle u^{\,j}\rangle$ means
$$l_k\bigl(\alpha_1 u,\dots ,\alpha_i u^{\,i},\dots ,\alpha_k u^{\,k}\bigr)= \Bigl(\prod_{i=1}^{k} u^{\,i}\Bigr)l_k(\alpha_1,\dots ,\alpha_k).$$
Hence, the bracket $l_k$ defined on $\bigoplus_{j=0}^{n-1} \Omega_{\mathrm{Ham}}^{\,n-j-1}(M)\langle u^{\,j}\rangle$ can be understood as the multilinear extension with respect to the formal variable $u$ of the bracket originally defined on $\Omega_{\mathrm{Ham}}^{\bullet}(M)$. In view of the preceding discussion, the variable $u$ acts merely by scalar multiplication (since it commutes with all operations and $du = 0$). Consequently, Lemma~\ref{s3l1} remains valid on the extended space of multivector fields $\mathfrak{X}^{i,j}(M)$. 

Having equipped $L_{\bullet,\bullet}$ with higher brackets, we now recover the original unshifted observables. Recall that the exponent of $u$ encodes the codimension of a submanifold within the world volume. Hence, to extract physical information for a submanifold of codimension $k$, one isolates the degree-$k$ component in $u$. This is achieved by differentiation with respect to $u$: because $u$ is a formal variable, the map
$$\alpha \;\longmapsto\; \left.\frac{d^k\alpha}{du^k}\right|_{u=0}$$
projects an extended observable $\alpha$ in extended phase space $\bigoplus_{j=0}^{n-1}\Omega_{\mathrm{Ham}}^{n-k-1}(M)\langle u^{j}\rangle$ onto its $u^k$ coefficient which resides in the ordinary phase space $\Omega_{\mathrm{Ham}}^{n-k-1}(M)$. Thus, the extended space serves as a generating object for ordinary Hamiltonian forms, and the extraction of coefficients corresponds geometrically to a fiber integration that eliminates the auxiliary direction.

\subsection{Graded $L_\infty$-algebra of observables}\label{s33}

We now extend Theorem~\ref{thm:rogers} to Hamiltonian forms of all degrees. These are encoded in the bigraded space $\bigoplus_{j=0}^{n-1} \Omega_{\mathrm{Ham}}^{\,n-j-1}(M)\langle u^{\,j}\rangle$, where the formal variable $u$ shifts the form degree. Since this space carries a $(\mathbb{Z},\mathbb{Z})$-bigrading $(i,j)$ with total degree $|\cdot| = i+j$, we adopt the following notion.
\begin{definition}
	A \emph{graded $L_\infty$-algebra} is a bigraded vector space $L_{\bullet,\bullet}$ equipped with graded skew-symmetric multilinear maps $l_k : L^{\otimes k} \to L$ ($k \ge 1$) of bidegree $(k-2,0)$ satisfying the generalized Jacobi identities
	\begin{equation}\label{s32e1}
		\sum_{\substack{i+j=m+1,\\ \sigma \in \mathrm{Sh}(i,m-i)}} 
		(-1)^{\sigma}\,\epsilon(\sigma)\,(-1)^{i(j-1)}\,
		l_j\!\bigl(l_i(x_{\sigma(1)},\dots ,x_{\sigma(i)}),\,x_{\sigma(i+1)},\dots ,x_{\sigma(m)}\bigr)=0 .
	\end{equation}
\end{definition}
Our main algebraic result is the following.

\begin{theorem}\label{thmmain}
	Given a $n$-plectic manifold $(M,\omega)$, there exists an graded $L_\infty$-algebra $(L_{\bullet,\bullet}, {l_k})$ with underlying bigraded vector space
	\begin{equation*}
		L_{i,j}=
		\begin{cases}
			\Omega_{\mathrm{Ham}}^{\,n-j-1}(M)\langle u^{\,j}\rangle, & i = 0,\\[10pt]
			\Omega^{\,n-i-j-1}(M)\langle u^{\,j}\rangle, & i \geq 1,
		\end{cases}	
	\end{equation*}
	The structure maps are defined following:
	\begin{itemize}
		\item The differential $l_1: L_{i,j} \to L_{i-1,j}$ is
		\begin{equation*}
			l_1\left(\alpha\right)\!=\!
			\begin{cases}
				d\alpha, &\alpha\in L_{\geq 1,\bullet};\\
				0,&\alpha\in L_{0,\bullet}.
			\end{cases}
		\end{equation*}
		\item The higher maps(or higher Poisson bracket)$\{l_k: L^{\otimes k} \to L \mid 2 \leq k < \infty\}$ are defined from the multilinear bracket  
		\begin{equation*}
			l_k\left(\alpha_1,\cdots,\alpha_k\right)=
			\begin{cases}
				\left(-1\right)^{\sum_{i=1}^k\left(i-1\right)\left(|\alpha_i|+1\right)}\iota_{v_{\alpha_1}\wedge\cdots\wedge v_{\alpha_k}}\omega,&\forall 1\leq i\leq k,\deg_1(\alpha_i)=0\\
				0,&\exists 1\leq i\leq k, \deg_1(\alpha_i)>0
			\end{cases}
		\end{equation*}
		for $k>1$, where $v_{\alpha_i}$ is the Hamiltonian vector field associated to $\alpha_i\in \bigoplus_{j=0}^{n-1} \Omega_{\mathrm{Ham}}^{\,n-j-1}(M)\langle u^{\,j}\rangle$
		where $v_{\alpha_1},\cdots, v_{\alpha_k}$ are the Hamiltonian multivector fields corresponding to $\alpha_1,\cdots, \alpha_k \in L$.
	\end{itemize}
\end{theorem}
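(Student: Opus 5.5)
Following the strategy of Rogers' proof of Theorem~\ref{thm:rogers}, I would first check that the data are well defined and then verify the generalized Jacobi identities \eqref{s32e1} by splitting into cases according to the first degrees of the inputs.

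\textbf{Well-definedness.} Since $du=0$ and $u$ commutes with all operations, every structure map is the $u$-multilinear extension of an operation on ordinary forms. It is therefore enough to check the following for forms $\alpha_i$ with $\deg_1(\alpha_i)=0$.
\begin{itemize}
\item \emph{Bidegree.} By construction $\deg_1(v_{\alpha_i}u^{j_i})=1$, so $\iota_{v_{\alpha_1}\wedge\cdots\wedge v_{\alpha_k}}\omega$ carries $u$-power $\sum_i j_i$. Its form degree is $n+1-\sum_i(j_i+1)$, which gives first degree $k-2$. Thus $l_k$ has bidegree $(k-2,0)$, and $l_1=d$ has bidegree $(-1,0)$.
\item \emph{Independence of choices.} The bracket $l_k$ does not depend on the choice of $v_{\alpha_i}$. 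Two choices differ by an element of $\operatorname{Ker}T$, and after contraction with $\omega$ such an element contributes nothing, since $\iota_{v\wedge w}\omega=\iota_w\iota_v\omega$ and we can reorder so that the kernel element is contracted first.
\item \emph{Graded skew-symmetry.} This follows from graded commutativity of the wedge product. The sign prefactor $(-1)^{\sum(i-1)(|\alpha_i|+1)}$ is exactly what converts the Koszul sign of the multivectors, which have total degree $1$ after the $u$-shift, into the sign $(-1)^\sigma\epsilon(\sigma)$ required for the $\alpha_i$.
\end{itemize}

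\textbf{Case analysis of the Jacobi identities.} Fix $m$ and homogeneous $x_1,\dots,x_m$. Since $l_k$ with $k\ge2$ vanishes as soon as some input has $\deg_1>0$, only a few compositions survive.

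\emph{Case $m=1$.} The identity reads $l_1l_1=d^2=0$. The one point to check is that $l_1$ of an element of $L_{1,j}$ lands in $L_{0,j}$, i.e.\ is Hamiltonian: it is exact, hence Hamiltonian with zero vector field.

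\emph{Case: all inputs in $L_{0,\bullet}$.} The surviving terms are $l_1l_m$ (only when $m\ge3$, so that $l_m$ has positive first degree) and $l_{m-1}(l_2(x_{\sigma(1)},x_{\sigma(2)}),\dots)$. The inner term $l_i$ with $3\le i<m$ has positive first degree and is killed by the outer $l_j$ with $j\ge2$. To control the $l_2$ terms we need a Hamiltonian vector field of $l_2(x_a,x_b)$. By Lemma~\ref{s3l1} with $m=2$,
\[
d\iota_{v_a\wedge v_b}\omega=\pm\iota_{[v_b,v_a]}\omega ,
\]
so $v_{l_2(x_a,x_b)}=\pm[v_a,v_b]$, with the sign fixed by the prefactor. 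Lemma~\ref{s3l1} for general $m$ then expresses $d\,l_m(x_1,\dots,x_m)$ as a signed sum over pairs $i<j$ of $\iota([v_j,v_i]\wedge v_1\wedge\cdots\widehat{v_i}\cdots\widehat{v_j}\cdots\wedge v_m)\omega$. These terms are exactly the $l_{m-1}(l_2(x_i,x_j),\dots)$ terms, once each is rewritten via the unshuffle that moves $x_i,x_j$ to the front.

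\emph{Case: some inputs of positive first degree.} Here any term with an outer or inner $l_{\ge2}$ receiving such an input vanishes, so the only possible survivors are $l_1$ applied to such an input, placed inside $l_j$. If $l_1x$ lands in degree $0$, its Hamiltonian vector field is zero, so $l_j(\dots)$ vanishes. If $l_1x$ stays in positive degree, the outer $l_j$ vanishes anyway. In either situation the identity holds trivially. The only subtlety is the mixed term $l_1l_m$ when $m=2$ and exactly one input lies in $L_{1,\bullet}$, and this is handled by the same observation.

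\textbf{Main obstacle: signs.} The step I expect to be hardest is the sign bookkeeping in the all-degree-$0$ case. One must show that the explicit signs in Lemma~\ref{s3l1}, the prefactor of $l_k$, the sign $(-1)^{i(j-1)}$, and the Koszul sign of the unshuffle all combine so that the terms cancel pairwise. I would reduce this by rewriting everything in terms of the shifted multivectors $v_{\alpha_i}u^{j_i}$ of total degree $1$. These commute with graded sign $-1$, which makes the Koszul signs behave as in Rogers' ungraded case. Lemma~\ref{s3l1} then collapses to Rogers' formula $d\iota_{v_1\wedge\cdots\wedge v_m}\omega=\sum_{i<j}(-1)^{i+j}\iota_{[v_i,v_j]\wedge\cdots}\omega$. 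With that reduction, the cancellation proceeds verbatim as in \cite{Rogers_2011}.
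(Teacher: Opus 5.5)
Your outline follows the paper's proof: the bidegree count, skew-symmetry, and the reduction for $m\ge 3$ with all inputs in $L_{0,\bullet}$ to matching $l_1l_m$ against the terms $l_{m-1}(l_2(\cdot,\cdot),\dots)$. The Hamiltonian multivector field of $l_2$ and the term-by-term match both come from Lemma~\ref{s3l1}. Your independence-of-choice argument and the $L_{1,\bullet}$ case at $m=2$ are useful additions that the paper omits.

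The gap is in the only step with real content, the sign cancellation. Your reduction for it does not hold. The variable $u$ has bidegree $(-1,1)$, hence total degree $0$, and the setup before the theorem lets it commute with everything. So $v_\alpha u^{j}\in\mathfrak{X}^{1,j}(M)$ has \emph{first} degree $1$ but total degree $j+1$. The Koszul signs in \eqref{s32e1} are computed with $|\alpha_i|=j_i$ and $|v_{\alpha_i}|=j_i+1$. Consequently $v_a\wedge v_b=(-1)^{(j_a+1)(j_b+1)}v_b\wedge v_a$; for example, a bivector and a vector field commute. The mixed-parity exponents in Lemma~\ref{s3l1} therefore do not collapse to Rogers' signs, and nothing transfers verbatim from \cite{Rogers_2011}.

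Your own skew-symmetry bullet already contradicts the reduction. Take $\alpha_1\in L_{0,1}$ with bivector $v_1$ and $\alpha_2\in L_{0,0}$ with vector field $v_2$. The definition gives $l_2(\alpha_1,\alpha_2)=-\iota_{v_1\wedge v_2}\omega$ and $l_2(\alpha_2,\alpha_1)=\iota_{v_2\wedge v_1}\omega$. The required relation $l_2(\alpha_2,\alpha_1)=-l_2(\alpha_1,\alpha_2)$ holds only because $v_2\wedge v_1=+v_1\wedge v_2$. If the shifted multivectors all anticommuted, the prefactor would produce the wrong sign. Making them uniformly odd would require $u$ to be odd, and then $u^2=0$ kills every sector with $j\ge 2$.

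Separately, the formula you attribute to Rogers is missing a factor. It reads $d\iota_{v_1\wedge\cdots\wedge v_m}\omega=(-1)^m\sum_{i<j}(-1)^{i+j}\iota([v_i,v_j]\wedge v_1\wedge\cdots\wedge\hat{v}_i\wedge\cdots\wedge\hat{v}_j\wedge\cdots\wedge v_m)\omega$. Without the $(-1)^m$, the signs already fail at $m=3$ even in the ungraded case.

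To close the gap you must do the comparison with the genuine degrees, as the paper does through the exponent $\eta(i,j,\{\alpha\})$. For each pair $i<j$, collect three contributions:
\begin{itemize}
\item the factor $-(-1)^{\sigma}\epsilon(\sigma)$ for the unshuffle with $\sigma(1)=i$, $\sigma(2)=j$, where $(-1)^\sigma=(-1)^{i+j-1}$ and $\epsilon(\sigma)=(-1)^{|\alpha_i|\sum_{a<i}|\alpha_a|+|\alpha_j|(\sum_{b<j}|\alpha_b|-|\alpha_i|)}$;
\item the prefactor of $l_{m-1}$, where the first slot contributes nothing and $\alpha_a$ ($a\neq i,j$) occupies slot $a+1$, $a$ or $a-1$ according as $a<i$, $i<a<j$ or $a>j$;
\item the sign in $v_{l_2(\alpha_i,\alpha_j)}=(-1)^{|\alpha_i||\alpha_j|}[v_{\alpha_j},v_{\alpha_i}]$.
\end{itemize}
Then check that the total agrees modulo $2$ with the prefactor of $l_m$ plus the exponent of Lemma~\ref{s3l1}. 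It does: graded terms such as $|\alpha_j|\sum_{b<j}|\alpha_b|$ match their counterparts in Lemma~\ref{s3l1}, and the $|\alpha_i||\alpha_j|$ from $\epsilon(\sigma)$ cancels the one from $v_{l_2}$.

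If you want a structural shortcut instead, the right one is d\'ecalage, under which the brackets become graded symmetric in the degrees $|\alpha|+1$. These are the actual multivector degrees and still have mixed parity, so they do not reduce to Rogers' uniform signs.
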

\begin{proof} 
	To prove the theorem, we should compute the degree of $l_k$ for any $k$ and then establish its graded skew-symmetry and the homotopy Jacobi identity. 
	
	\noindent\textbf{1. Degree of $l_k$}
	Let $v\in\mathfrak{X}^{1,a}\left(M\right)$ be a Hamiltonian multivector of degree $\left(1,a\right)$, with corresponding Hamiltonian form $\alpha\in\Omega_{\mathrm{Ham}}^{n-a-1}(M)\langle u^a\rangle$ of degree $\left(0,a\right)$. Thus we have the relation
	$$	|v| = |\alpha|+ 1.$$
	It is easy to check that $|l_k|=\left(k-2,0\right)$.
	
	\noindent\textbf{2. Graded skew-symmetry.}
	Since any permutation decomposes into transpositions, it suffices to check the effect of swapping two entries. Let $\alpha_1,\dots,\alpha_k$ be homogeneous Hamiltonian forms. Using $|v_\alpha| = |\alpha|+1$, we compute:
	\begin{align*}
		&l_k\bigl(\alpha_1,\dots,\alpha_i,\cdots,\alpha_j,\dots,\alpha_k\bigr) \\
		=& \; (-1)^{\sum_{a=1}^{k}(a-1)(|\alpha_a|+1)}\;
		\iota_{v_{\alpha_1}\wedge\cdots\wedge v_{\alpha_i}\wedge\cdots\wedge v_{\alpha_j}\wedge\cdots\wedge v_{\alpha_k}}\omega \\[2mm]
		=& \; (-1)^{\sum_{a=1}^{k}(a-1)(|\alpha_a|+1)}\;
		(-1)^{|v_{\alpha_i}||v_{\alpha_j}|+\sum_{b=i+1}^{j-1}|v_{\alpha_b}|(|v_{\alpha_i}|+|v_{\alpha_j}|)} \\
		& \qquad\qquad \iota_{v_{\alpha_1}\wedge\cdots\wedge v_{\alpha_j}\wedge,\cdots,\wedge v_{\alpha_i}\wedge\cdots\wedge v_{\alpha_k}}\omega \\[2mm]
		=& (-1)^{\left(j-i\right)\left(|v_{\alpha_i}|-|v_{\alpha_j}|\right)+|v_{\alpha_i}||v_{\alpha_j}|+\sum_{b=i+1}^{j-1}|v_{\alpha_b}|(|v_{\alpha_i}|+|v_{\alpha_j}|)}
		l_k\!\bigl(\alpha_1,\!\dots\!,\!\alpha_j,\!\cdots\!,\alpha_i,\!\dots\!,\!\alpha_k\bigr) \\[2mm]
		=& \; -\,(-1)^{|\alpha_i||\alpha_j|+\sum_{b=i+1}^{j-1}|\alpha_b|(|\alpha_i|+|\alpha_j|)}\;
		l_k\bigl(\alpha_1,\dots,\alpha_j,\cdots,\alpha_i,\dots,\alpha_k\bigr) \\[2mm]
		=& \; -\epsilon(\alpha_i\leftrightarrow\alpha_j)\;
		l_k\bigl(\alpha_1,\dots,\alpha_j,\cdots,\alpha_i,\dots,\alpha_k\bigr),
	\end{align*}
	where $\epsilon(\alpha_i\leftrightarrow\alpha_j)$ is the Koszul sign associated with the transposition. This is precisely the required graded skew-symmetry.
	
	\noindent\textbf{3. Homotopy Jacobi identity \eqref{s32e1}.} For $m=1$ the identity reduces to $l_1^2=0$, i.e. $d^2=0$, which holds trivially.
	
	For $m=2$ the identity reads
	$$	l_1\bigl(l_2(\alpha_1,\alpha_2)\bigr)
	- l_2\bigl(l_1(\alpha_1),\alpha_2\bigr)
	+ (-1)^{|\alpha_1||\alpha_2|}l_2\bigl(l_1(\alpha_2),\alpha_1\bigr)=0.$$
	If $\alpha_1,\alpha_2\in L_{0,\bullet}$, $l_2(\alpha_1,\alpha_2)\in L_{0,\bullet}$, hence each term vanishes individually. If $\alpha_1$ or $\alpha_2\in L_{\geq 2,\bullet}$, each term also vanishes.
	
	For $m>2$ we use the fact that the Hamiltonian multivector field associated with $l_1(v_\alpha)$ is zero. This simplifies the general identity \eqref{s32e1} to
	\begin{align}\label{s3e3}
		&dl_m(\alpha_1,\dots,\alpha_m)\nonumber\\
		=& -\sum_{\sigma\in\mathrm{Sh}(2,m-2)}
		(-1)^{\sigma}\epsilon(\sigma)l_{m-1}\!\Bigl(l_2\bigl(\alpha_{\sigma(1)},\alpha_{\sigma(2)}\bigr),\alpha_{\sigma(3)},\dots,\alpha_{\sigma(m)}\Bigr).
	\end{align}
	By Lemma~\ref{s3l1}
	\begin{align*}
		dl_2\left(\alpha_{\sigma(1)},\alpha_{\sigma(2)}\right)=&\left(-1\right)^{|\alpha_{\sigma(2)}|+1}d\iota_{\alpha_{\sigma(1)}\wedge\alpha_{\sigma(2)}}\omega\\
		=&\left(-1\right)^{|\alpha_{\sigma(2)}|+1}\left(-1\right)^{\left(|v_{\alpha_{\sigma(2)}}|-1\right)|v_{\alpha_{\sigma(1)}}|}\iota_{\left[v_{\alpha_{\sigma(2)}},v_{\alpha_{\sigma(1)}}\right]}\omega
	\end{align*}
	Hence,
	\begin{align*}
		v_{l_2(\alpha_{\sigma(1)},\alpha_{\sigma(2)})}=\left(-1\right)^{|\alpha_{\sigma(1)}||\alpha_{\sigma(2)}|}\left[v_{\alpha_{\sigma(2)}},v_{\alpha_{\sigma(1)}}\right]
	\end{align*}

	The right-hand side of \eqref{s3e3} expands as
	\begin{align*}
		-\sum_{\sigma\in \mathrm{Sh}(2,m-2)}&
		(-1)^{\sigma}\epsilon(\sigma)(-1)^{\sum_{a=3}^{m-1}(|\alpha_{\sigma(a)}|+1)(a-2)+|\alpha_{\sigma(1)}||\alpha_{\sigma(2)}|} \iota\Big([v_{\alpha_{\sigma(2)}},v_{\alpha_{\sigma(1)}}]\\
		&\wedge v_{\alpha_{\sigma(3)}}\wedge\cdots\wedge v_{\alpha_{\sigma(m)}}\Big)\omega .
	\end{align*}
	In a shuffle permutation $\sigma$ with $\sigma(1)=i,\; \sigma(2)=j\;(i<j)$,
	$$(-1)^{\sigma}=(-1)^{i+j-1}, \qquad
	\epsilon(\sigma)=(-1)^{|\alpha_i|\sum_{a=1}^{i-1}|\alpha_a|
		+|\alpha_j|\bigl(\sum_{b=1}^{j-1}|\alpha_b|-|\alpha_i|\bigr)}.$$
	Substituting these expressions and carefully collecting all sign factors, one finds that the right-hand side of \eqref{s3e3} equals
	$$	\sum_{1\le i<j\le m}
	(-1)^{\eta(i,j,\{\alpha\})}\,
	\iota\!\Bigl([v_{\alpha_j},v_{\alpha_i}]\wedge
	v_{\alpha_1}\wedge\cdots\wedge\hat{v}_{\alpha_i}
	\wedge\cdots\wedge\hat{v}_{\alpha_j}\wedge\cdots\wedge v_{\alpha_m}\Bigr)\omega ,$$
	where $\eta(i,j,\{\alpha\})$ denotes the total sign exponent arising from the permutation and the grading.
	\begin{align*}
		\eta=&i+j+|\alpha_i|\sum_{a=1}^{i-1}|\alpha_a|+|\alpha_j|\left(\sum_{b=1}^{j-1}|\alpha_a|-|\alpha_i|\right)+\sum_{a=1}^{i-1}a\left(|\alpha_a|+1\right)\\
		&+\sum_{a=i+1}^{j-1}\left(a-1\right)\left(|\alpha_a|+1\right)+\sum_{a=j+1}^{m}\left(a-2\right)\left(|\alpha_a|+1\right)+|\alpha_i||\alpha_j|\\
		=&\sum_{a=1}^{i-1}\left(|\alpha_i|+1\right)\left(|\alpha_a|+1\right)+\sum_{a=1}^{j-1}|\alpha_j|\left(|\alpha_a|+1\right)+\sum_{a=1}^{m}\left(a-1\right)\left(|\alpha_a|+1\right)\\
		&+\sum_{a=j+1}^{m}\left(|\alpha_a|+1\right)-2\sum_{a=j+1}^{m}\left(|\alpha_a|+1\right)-2\left(i-1\right)|\alpha_i|-2\left(j-1\right)|\alpha_j|+2
	\end{align*}
	
	On the other hand, directly computing the left-hand side of \eqref{s3e3} using Lemma~\ref{s3l1} :
	\begin{align*}
		&dl_m\left(\alpha_1,\cdots,\alpha_m\right)\\
		=&\sum_{1\leq i<j\leq m}\left(-1\right)^{\sum_{a=1}^m\left(|\alpha_a|+1\right)\left(a-1\right)+\sum_{a=1}^{m-j}|v_{j+a}|+\sum_{a=1}^{j-1}\left(|v_j|-1\right)|v_a|+\sum_{a=0}^{i-1}|v_i||v_a|}\\
		&\iota\left(\left[v_{\alpha_{j}},v_{\alpha_i}\right]\wedge v_{\alpha_1}\wedge\cdots\wedge \hat{v}_{\alpha_i}\wedge\cdots \wedge \hat{v}_{\alpha_j}\wedge\cdots\wedge v_{\alpha_m}\right)\omega\\
		=&\!\!\!\!\!\!\!\sum_{1\leq i<j\leq m}\!\!\!\!\!\!\!\left(-1\right)^{\sum_{a=1}^m\left(|\alpha_a|+1\right)\left(a-1\right)+\sum_{a=1}^{m-j}\left(|\alpha_{j+a}|+1\right)+\sum_{a=1}^{j-1}|\alpha_j|\left(|\alpha_a|+1\right)+\sum_{a=0}^{i-1}\left(|\alpha_i|+1\right)\left(|\alpha_a|+1\right)}\\
		&\iota\left(\left[v_{\alpha_{j}},v_{\alpha_i}\right]\wedge v_{\alpha_1}\wedge\cdots\wedge \hat{v}_{\alpha_i}\wedge\cdots \wedge \hat{v}_{\alpha_j}\wedge\cdots\wedge v_{\alpha_m}\right)\omega
	\end{align*}
	Hence \eqref{s3e3} holds, and therefore the full homotopy Jacobi identity \eqref{s32e1} is satisfied for all $m\ge 1$.
\end{proof}

\begin{remark}[Higher Heisenberg algebra]\label{rem:heisenberg}
	Let $\tilde{\mathfrak{X}}^{1}_{\mathrm{Ham}}(M)$ be a subspace of Hamiltonian vector fields in which every pair commutes, and let $\tilde{\mathfrak{X}}^{\bullet}_{\mathrm{Ham}}(M)$ be the graded-commutative ring freely generated by these vector fields under addition and wedge product. Denote by $\tilde{\Omega}^{\bullet}_{\mathrm{Ham}}(M)$ the corresponding Hamiltonian forms. Because all Schouten--Nijenhuis brackets among generators vanish, Lemma~\ref{s3l1} implies $d l_n=0$ for every $n\ge 2$; moreover, any nested bracket vanishes since the associated Hamiltonian multivector field of the inner bracket is zero. Consequently, $(\tilde{\Omega}^{\bullet}_{\mathrm{Ham}}(M),\{l_k\})$ forms an $L_\infty$-algebra in which each $l_k$ is closed and all higher Jacobi identities hold trivially. This structure generalizes the classical Heisenberg algebra: for $n=1$ (symplectic geometry) the only non-trivial bracket is $l_2$, which coincides with the constant Poisson bracket $1$.
	
	As will be shown in Section~\ref{s411}, this higher Heisenberg algebra naturally underlies the definition of the semi-simplicial set $\mathbf{sOb}_\bullet(M)$.
\end{remark}

\section{Geometric constructions of observable algebras}\label{s4}

\begin{figure}[htbp]
	\centering
	\begin{tikzpicture}
		\draw (0,0) rectangle (6,3);
		
		\draw[red, thick] (3,0) -- (3,3) node[pos=0.1, right] {$\bm{\gamma}$};
		
		\draw[green, thick] (1.5,1.5) circle (0.5);
		\node at (1.8,1.8) {$\bm{\alpha}$};
		
		\draw[green, thick] (4.5,1.5) circle (0.5);
		\node at (4.2,1.8) {$\bm{\beta}$};
		
		\draw[red, dashed, ->, thick, >=stealth] (2.0,1.5) -- (4.0,1.5);
		
		\node[below right] at (0,3) {$(M,\omega)$};
		\node[below left] at (6,3) {$(M,\omega)$};
	\end{tikzpicture}
	\caption{A defect $\gamma$ in the worldvolume. When an observable $\alpha$ crosses the defect, it transforms into $\beta = l_2(\alpha,\gamma)$.}\label{fig1}
\end{figure}
\subsection{Observables as defects}\label{s40}
Similar to the framework of generalized global symmetries \cite{G2014} and higher charges \cite{B2023,B2024,B2025}, where symmetries and charges are realized as topological defects and their action corresponds to a charge crossing a defect, we adopt an analogous picture. In our setting, $k$-form Hamiltonian observables are interpreted as $k$-dimensional defects. Under orientation reversal, the Hamiltonian associated with a manifold changes sign --- a property that will be relevant when considering defect orientation. When an observable $\alpha$ crosses a defect $\gamma$, it transforms into $\beta := l_2(\alpha,\gamma)$, encoding the defect's action (Figure~\ref{fig1}). This suggests that the algebraic structure of Hamiltonian forms encodes the properties of a global symmetry; indeed, the conserved charges of such a symmetry are precisely given by Hamiltonian forms. A natural consistency condition follows: if $\alpha$ can be continuously deformed around $\gamma$ without crossing, then $l_2(\alpha,\gamma)=0$, reflecting trivial action. Higher algebraic relations, such as the homotopy Jacobi identity, admit a geometric interpretation: they reflect the fact that the order in which an observable crosses a network of defects can be rearranged --- a phenomenon reminiscent of the Yang-Baxter equation and braid group relations.

\begin{figure}[htbp]
	\centering
	\begin{tikzpicture}
		\fill[green!20] (0,0) -- (1.75,0.25) -- (1.75,3.25) -- (0,3) -- cycle;
		\fill[blue!20] (1.75,0.25) -- (3.5,0.5) -- (3.5,3.5) -- (1.75,3.25) -- cycle;
		\draw[thick] (0,0) -- (3.5,0.5) -- (3.5,3.5) -- (0,3) -- cycle;
		\draw[dashed, thick] (1.75,0.25) -- (1.75,3.25) node[pos=0.9,right] {$\gamma$};
		\node at (0.8,1.5) {A, $\alpha$};
		\node at (2.5,1.8) {B,$\beta$};
		\begin{scope}[xshift=8cm]
			\draw[ thick] (0,0) -- (0,3) node[pos=0.5,left] {$\gamma$ };
			
			\fill[blue!20] (0,0) -- (2,0.3) -- (2,3.3) -- (0,3) -- cycle;
			\draw[thick] (0,0) -- (2,0.3) -- (2,3.3) -- (0,3) -- cycle;
			\node at (1,1.7) {B};
			
			\fill[green!40, opacity=0.7] (0,0) -- (2,-0.4) -- (2,2.6) -- (0,3) -- cycle;
			\draw[thick] (0,0) -- (2,-0.4) -- (2,2.6) -- (0,3) -- cycle;
			\node at (1,1.0) {A};
			\node[below] at (1,-0.5) {$\beta-\alpha$};
		\end{scope}
		
		\draw[->, thick, >=Stealth] (5.5,1.5) -- (6.5,1.5) node[midway, above] {fold};
	\end{tikzpicture}
	\caption{Folding construction for gluing two $(n-1)$-dimensional submanifolds of $(M,\omega)$. Left: two submanifolds $(A,\alpha)$ and $(B,\beta)$ separated by an $(n-2)$-dimensional interface $\gamma$. Right: after folding, the two submanifolds are superimposed, and the data on the interface is given by the difference $\beta - \alpha$. }\label{fig2}
\end{figure}

We now recursively assign a collection of Hamiltonian forms to submanifolds of the worldvolume. Starting from the top, the $n$-form $\theta$, satisfying $d\theta = \omega$, serves as an observable on the full worldvolume. On codimension-$1$ regions $A$ and $B$ of the worldvolume, we place $(n-1)$-form Hamiltonian observables $\alpha$ and $\beta$, respectively. Suppose these two regions are joined along a common codimension-$1$ interface, as illustrated in Figure~\ref{fig2}. The observable on this interface must be an $(n-2)$-form $\gamma$ for which there exists a vector field $v_\gamma$ such that
\[
\iota_{v_\gamma}(d\beta-d\alpha) = -d\gamma.
\]
This condition encodes the compatibility of the Hamiltonian data across the interface. Since $\alpha$ and $\beta$ are Hamiltonian forms with associated Hamiltonian multivector fields $v_\alpha$ and $v_\beta$, we have $\iota_{v_\alpha}\omega = -d\alpha$ and $\iota_{v_\beta}\omega = -d\beta$. Substituting these into the compatibility condition yields
\[
\iota_{(v_\alpha - v_\beta) \wedge v_\gamma}\omega = -d\gamma,
\]
which shows that $\gamma$ is itself a Hamiltonian form on $(M,\omega)$.

When more than two regions meet at a common interface, the construction must account for higher-order junctions. Consider $n$ regions $A_1,\dots,A_n$ with Hamiltonian forms $\theta_1,\dots,\theta_n$, respectively, meeting pairwise along codimension-$1$ interfaces labeled by Hamiltonian forms $\alpha_{12},\dots,\alpha_{1n}$, which are themselves constructed recursively as above, satisfying
\[
\iota_{v_{\alpha_{i(i+1)}}}(d\theta_{i+1} - d\theta_i) = -d\alpha_{i(i+1)}.
\]
These interfaces in turn meet at a common codimension-$2$ junction $\delta$, as illustrated in Figure~\ref{fig3}. By rotating all interfaces so that they overlap, we obtain a compatibility condition for $\delta$:
\[
\iota_{v_\delta}\Bigl(\sum_i \pm d\alpha_{i(i+1)}\Bigr) = -d\delta.
\]
The $\pm$ sign is determined by the orientation of each interface relative to the overlapping surface obtained by this rotation: if the orientation coincides, the sign is $+$; if opposite, it is $-$. As in the codimension-$1$ case, one verifies that $\delta$ is also a Hamiltonian form. Proceeding inductively, this construction associates a Hamiltonian form to every submanifold of the worldvolume, regardless of its codimension.

\begin{figure}[htbp]
	\centering
	\begin{tikzpicture}
		\draw (0,0) rectangle (6,3);
		\draw[red, thick] (3,1.5) -- (3,3) node[pos=0.5, right] {$\bm{\alpha_{12}}$};
		\draw[green,thick] (0,1)--(3,1.5) node[pos=0.5,below] {$\bm{\alpha_{1n}}$};
		\draw[blue,thick] (3,1.5)--(6,0.5) node[pos=0.5,below] {$\bm{\alpha_{23}}$};
		\draw[blue,thick] (3,1.5)--(2,0);
		\filldraw[black] (3,1.5) circle (2pt) node[below] {$\bm{\delta}$};
		\node[below right] at (0,3) {$(A_1,\theta_1)$};
		\node[below left] at (6,3) {$(A_2,\theta_2)$};
		\node[above] at (1,0) {$(A_n,\theta_n)$};
		\node[above] at (3.5,0.5) {$\cdots$};
	\end{tikzpicture}
	\caption{Multiple regions $A_1,\dots,A_n$ with Hamiltonian forms $\theta_i$ joined along common interfaces labeled by $\alpha_{ij}$, meeting at a junction $\delta$.}\label{fig3}
\end{figure}

\subsection{Semi-simplicial set of observables}\label{s41}
The recursive procedure described in Section~\ref{s40} for assigning Hamiltonian forms to submanifolds can be naturally extended to geometric simplices, thereby reflecting the structure of a simplicial set of observables.

\subsubsection{Construction of the semi-simplicial set}\label{s411}

Recall from Section~\ref{s2.4} that we assume the infinitesimal generators of worldvolume translations are Hamiltonian vector fields. Physically, the conserved Noether charges (energy-momentum) must be represented as genuine observables. Mathematically, this assumption guarantees that the auxiliary vector fields \(v_1,\dots,v_{n-k}\) appearing in the definition of \(\bm{\mathrm{sOb}}_\bullet(M)\) are Hamiltonian. Moreover, these vector fields are obtained as the images of the coordinate vector fields on \(\Delta^k\times[0,1]^{n-k}\) under the tangent map of \(\tau_k\); because the coordinate vector fields pairwise commute and the tangent map preserves Lie brackets, the resulting vector fields are pairwise commuting Hamiltonian vector fields on the image of \(\tau_k\). These commuting vector fields are precisely the generators of the higher Heisenberg algebra introduced in Remark~\ref{rem:heisenberg}. Their commutativity, the defining feature of that algebra, is the essential ingredient in the proof of the Kan property (Theorem~\ref{s4t1}), where the filler is defined by omitting the generating vector field corresponding to the inward normal direction. No global topological triviality of \(M\) is required; all constructions rely only on the local Poincar\'e lemma on contractible open sets.

Throughout this section we work with these commuting Hamiltonian vector fields. Their commutativity underlies all the simplifications that follow. In particular, the recursive compatibility conditions that determine the Hamiltonian forms on interfaces, junctions, and higher-codimension strata reduce to a uniform rule: the Hamiltonian form on a junction is obtained by contracting the differentials of the Hamiltonian forms on the intersecting branches with the transverse vector fields that generate the motion along those branches. For instance, in the setting of Figure~\ref{fig2}, the vector fields \(v_\alpha\) and \(v_\beta\) are perpendicular to the interface, lie respectively in regions \(A\) and \(B\), and commute. With \(d\alpha = -\iota_{v_\beta}\omega\) and \(d\beta = -\iota_{v_\alpha}\omega\), the Hamiltonian form \(\gamma\) on the interface satisfies \(\iota_{v_\alpha}d\alpha = \iota_{v_\beta}d\beta = -d\gamma\), and analogous simplifications hold for higher-codimension junctions.

Based on this discussion, the simplicial set of observables can now be constructed as follows.

\begin{definition}[Semi-simplicial set of observables]\label{s4d1}
	Let \((M,\omega)\) be an \(n\)-plectic manifold. The semi-simplicial set of observables on \((M,\omega)\), denoted \(\bm{\mathrm{sOb}}_\bullet(M)\), is defined as follows. For each \(0\leq k\leq n\), the set of \(k\)-simplices is
	\[
	\bm{\mathrm{sOb}}_k(M) \;:=\left\{ \sigma_k^*\alpha \;\middle|\;
	\begin{aligned}
		&\sigma_k:\Delta^k \;\hookrightarrow\; \Delta^k\times[0,1]^{n-k} \;\xrightarrow{\;\tau_k\;}\; M \text{ smooth},\\
		&\alpha\in\Omega_{\mathrm{Ham}}^k(U),\; U \text{ an open neighbourhood of } \tau_k(\Delta^k\times[0,1]^{n-k}),\\
		&d\alpha = -\iota_{v_1\wedge\cdots\wedge v_{n-k}}\,\omega|_U,
	\end{aligned}\right\},
	\]
	where \(\tau_k\) maps each \([0,1]\)-factor to a flow line of a Hamiltonian vector field \(v_j\) on \(M\) (\(j=1,\dots,n-k\)), and \(v_1,\dots,v_{n-k}\) are the associated pushforward vector fields along the \([0,1]^{n-k}\) directions. By the Hamiltonian translation hypothesis (Section~\ref{s2.4}), these vector fields are Hamiltonian and pairwise commuting. The pullback \(\sigma_k^*\alpha\) is taken along \(\sigma_k\); the \([0,1]^{n-k}\) directions are auxiliary and serve solely to encode the Hamiltonian translation data that guarantee the existence of the face maps. By convention \(\Omega_{\mathrm{Ham}}^n(U)=\{\theta|_U\}\).
	
	The face maps \(d^i:\bm{\mathrm{sOb}}_k(M)\to\bm{\mathrm{sOb}}_{k-1}(M)\) for \(i=0,\dots,k\) are defined as follows.
	Let \(\partial_i:\Delta^{k-1}\hookrightarrow\Delta^k\) be the inclusion of the \(i\)-th face, and let \(v\) be the inward unit normal vector field along this face with pushforward \(\tilde v = d\tau_k(v)\). The underlying singular simplex of the face is \(\sigma_{k-1} = \tau_{k-1}\circ\iota_{k-1}\), where \(\iota_{k-1}: \Delta^{k-1}\hookrightarrow\Delta^{k-1}\times[0,1]^{n-k+1}\) is the natural embedding and \(\tau_{k-1}\) is obtained from \(\tau_k\) via
	\[
	\tau_{k-1}(p,t,x_1,\dots,x_{n-k}) = \psi_{\tilde v}^t\!\bigl(\tau_k(\partial_i(p),\,x_1,\dots,x_{n-k})\bigr),
	\]
	with \(\psi_{\tilde v}^t\) the flow of \(\tilde v\). Thus \(\sigma_{k-1} = \sigma_k\circ\partial_i\) as maps on \(\Delta^{k-1}\), while the auxiliary directions increase by one to accommodate the normal direction.
	
	Because the Hamiltonian vector fields \(v_1,\dots,v_{n-k},\tilde v\) are pairwise commuting, the multivector field \(v_1\wedge\cdots\wedge v_{n-k}\wedge\tilde v\) is Hamiltonian on \(U\). Consequently,
	\[
	\iota_{\tilde v}\,d\alpha = \iota_{v_1\wedge\cdots\wedge v_{n-k}\wedge\tilde v}\,\omega
	\]
	is closed. Hence there exists a \((k-1)\)-form \(\eta\) on \(U\), unique up to closed forms on \(U\), satisfying
	\[
	d\eta = -(-1)^i\,\iota_{\tilde v}\,d\alpha.
	\]
	The face map is defined by
	\[
	d^i(\sigma_k^*\alpha) := \sigma_{k-1}^*\,\eta.
	\]
	Note that \(\eta\) satisfies the Hamiltonian condition with respect to the vector fields \(\{v_1,\dots,v_{n-k},\tilde v\}\) along the auxiliary \([0,1]^{n-k+1}\) directions, so the image indeed lies in \(\bm{\mathrm{sOb}}_{k-1}(M)\).
	
	For \(k>n\) we make no definition. Thus \(\bm{\mathrm{sOb}}_\bullet(M)\) is an \(n\)-truncated semi-simplicial set.
\end{definition}

\begin{remark}[Well-definedness and role of the auxiliary directions]
	The auxiliary \([0,1]^{n-k}\) directions in the definition of \(\tau_k\) encode the Hamiltonian translation data: the condition \(d\alpha = -\iota_{v_1\wedge\cdots\wedge v_{n-k}}\omega\) selects precisely those Hamiltonian forms whose generating multivector field is the wedge product of the commuting translations along these directions. Together with the inward normal vector field on \(\Delta^k\), these translations form a commuting family of Hamiltonian vector fields, which guarantees that \(\iota_{\tilde v}\,d\alpha\) is exact. Therefore the required \((k-1)\)-form \(\eta\) exists and is unique modulo closed forms, making the face maps well-defined.
\end{remark}

\begin{theorem}
	$\bm{\mathrm{sOb}}_\bullet(M)$ is an $n$-truncated semi-simplicial set (i.e., the face maps satisfy $d^i d^j = d^{j-1} d^i$ for $i<j$).
\end{theorem}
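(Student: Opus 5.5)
We need to verify $d^i d^j = d^{j-1} d^i$ for $i<j$ on $\bm{\mathrm{sOb}}_k(M)$, $2\le k\le n$. The proof splits into two parts: the underlying singular simplices, and the Hamiltonian forms attached to them.

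\textbf{Underlying simplices.} Since $d^i$ replaces $\sigma_k$ by $\sigma_k\circ\partial_i$, the underlying simplices of the two sides are $\sigma_k\circ\partial_j\circ\partial_i$ and $\sigma_k\circ\partial_i\circ\partial_{j-1}$. These agree by the cosimplicial identity $\partial_j\partial_i=\partial_i\partial_{j-1}$ for $i<j$.

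For the auxiliary data, both sides produce a $\tau_{k-2}$ with two new flow directions, $\tilde v_j$ and $\tilde v_i$, in opposite orders. These are flows of the inward normals to the two faces of the codimension-$2$ face $\Delta^{k-2}\subset\Delta^k$. By the Hamiltonian translation hypothesis, all of $v_1,\dots,v_{n-k},\tilde v_i,\tilde v_j$ pairwise commute, so their flows commute. Hence the two composite maps $\tau_{k-2}$ coincide up to the reordering of the last two $[0,1]$ coordinates. I will identify simplices differing by such a reordering, which is implicit in the definition since only the pullback along $\sigma_{k-2}$ is retained. One point needs care: the inward normal of the $i$-th face of $\partial_j\Delta^{k-1}$, computed inside $\Delta^{k-1}$, differs from the inward normal of $\partial_i\Delta^k$ restricted there. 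However, both lie in the span of the same commuting coordinate fields, so the multivector $\tilde v\wedge\tilde v'$ is the same up to a positive factor and sign. I will absorb this into orientation conventions.

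\textbf{Forms.} Write $\eta_j$ with $d\eta_j=-(-1)^j\iota_{\tilde v_j}d\alpha$. Then $d^id^j$ gives $\xi$ with
\[d\xi=-(-1)^i\iota_{\tilde v_i}d\eta_j=(-1)^{i+j}\iota_{\tilde v_i}\iota_{\tilde v_j}d\alpha .\]
The other order gives $\xi'$ with
\[d\xi'=-(-1)^{j-1}\iota_{\tilde v_j}d\eta_i=(-1)^{i+j-1}\iota_{\tilde v_j}\iota_{\tilde v_i}d\alpha .\]
Since $\iota_{\tilde v_j}\iota_{\tilde v_i}=-\iota_{\tilde v_i}\iota_{\tilde v_j}$, we get $d\xi=d\xi'$, so $\xi-\xi'$ is closed on $U$. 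Because $\eta$ is only defined modulo closed forms, this is exactly the statement that both sides define the same class. I therefore interpret $\bm{\mathrm{sOb}}_k(M)$ as pullbacks of forms modulo closed forms, as the paper's phrase ``unique up to closed forms'' already forces for the face maps to be well defined. Equivalently, if $U$ is chosen contractible as in Definition~\ref{localob}, the closed ambiguity is exact, and I fix a normalization.

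I also need to check that the intermediate face lands in $\bm{\mathrm{sOb}}_{k-1}$. Closedness of $\iota_{\tilde v_i}d\eta_j$ follows from $d\iota_{\tilde v_i}\iota_{\tilde v_j}\iota_{v_1\wedge\cdots}\omega=0$, which is Lemma~\ref{s3l1} with all brackets vanishing; this is the higher Heisenberg property of Remark~\ref{rem:heisenberg}.

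\textbf{Main obstacle.} The hardest step is the geometric identification of the auxiliary normal fields and the sign bookkeeping from reindexing $j\mapsto j-1$ and reordering the interior products. I expect the $(-1)^i$ normalization in the face map to be exactly what makes the signs cancel, and I would verify this first on $\Delta^2$ before running the general computation.
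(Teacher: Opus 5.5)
Your overall strategy is the paper's: write the defining equations of the two composites, compare the right-hand sides using antisymmetry, and conclude that the two primitives differ by a closed form, so they agree in $\bm{\mathrm{sOb}}_{k-2}(M)$. The gap is the step you flag yourself, and it is not a matter of convention.

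The face map $d^i$ applied to $d^jx$ only sees the $(k-1)$-simplex $d^jx$. Its normal is therefore $(\partial_j)_*v_i$, the inward normal to $F=\partial_j\partial_i\Delta^{k-2}$ \emph{inside} the face $\partial_j\Delta^k$. It is not the ambient normal $\tilde v_i$ of $\partial_i\Delta^k$. Likewise, the second normal in $d^{j-1}d^i$ is $(\partial_i)_*v_{j-1}$, not $\tilde v_j$. So your expressions for $d\xi$ and $d\xi'$ describe different operations from the ones defined.

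You cannot ``absorb'' this into orientation conventions. The normals and the sign $(-1)^i$ are fixed once and for all by the definition of $d^i$, and cannot be re-chosen for each composite. If the constants relating the true bivectors to $\tilde v_j\wedge\tilde v_i$ and to $\tilde v_i\wedge\tilde v_j$ differed between the two orders (for instance in sign), you would get $d\xi=-d\xi'$ and the identity would fail. Showing that this relative constant is $1$ is exactly the content of the theorem. The paper supplies it as the bivector identity $(\partial_j)_*v_i\wedge v_j=-(\partial_i)_*v_{j-1}\wedge v_i$.

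Your route can be repaired directly. Embed the faces isometrically (standard simplex in $\mathbb{R}^{k+1}$), and let $\phi\in(0,\pi)$ be the dihedral angle of $\Delta^k$ along $F$. Decomposing in the $2$-plane normal to $F$, and then applying $d\sigma_k$, gives
\[
\tilde v_i=\sin\phi\,(\partial_j)_*v_i-\cos\phi\,\tilde v_j,\qquad
\tilde v_j=\sin\phi\,(\partial_i)_*v_{j-1}-\cos\phi\,\tilde v_i .
\]
Since $\iota_{\tilde v_j}\iota_{\tilde v_j}=0$, it follows that
\[
\iota_{\tilde v_i}\iota_{\tilde v_j}d\alpha=\sin\phi\,\iota_{(\partial_j)_*v_i}\iota_{\tilde v_j}d\alpha,\qquad
\iota_{\tilde v_j}\iota_{\tilde v_i}d\alpha=\sin\phi\,\iota_{(\partial_i)_*v_{j-1}}\iota_{\tilde v_i}d\alpha,
\]
with the \emph{same} positive constant in both orders. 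So the true $d\xi$ and $d\xi'$ are your expressions divided by $\sin\phi$. Your antisymmetry argument then gives $d\xi=d\xi'$, which is equivalent to the paper's bivector identity.

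The same correction applies to your treatment of the auxiliary directions. The second flows are along $(\partial_j)_*v_i$ and $(\partial_i)_*v_{j-1}$, not along $\tilde v_i$ and $\tilde v_j$. Hence the two $\tau_{k-2}$ are not literally reorderings of each other. This is harmless only because, as in the paper, just $\sigma_{k-2}$ and the class of the form are retained.
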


\begin{proof}
	Let $x = \sigma_k^*\alpha \in \bm{\mathrm{sOb}}_k(M)$ and $i<j$. Denote by $v_j$ the unit vector field on $\Delta^k$ normal to the $j$-th face $\partial_j\Delta^k$ and pointing inward, and by $v_i$ the unit vector field on $\Delta^{k-1}$ normal to the $i$-th face $\partial_i\Delta^{k-1}$ and pointing inward. By definition of the face map,
	\[ 
	d^j(x) = \sigma_{k-1}^*\eta_j,\qquad d\eta_j = -(-1)^j\,\iota_{d\sigma_k(v_j)}\,d\alpha,
	\]
	where $\sigma_{k-1} = \sigma_k\circ\partial_j$. Applying $d^i$ gives
	\[
	d^i d^j(x) = \sigma_{k-2}^*\eta_{ij},\qquad d\eta_{ij} = -(-1)^i\,\iota_{d\sigma_{k-1}(v_i)}\,d\eta_j,
	\]
	with $\sigma_{k-2} = \sigma_{k-1}\circ\partial_i = \sigma_k\circ\partial_j\circ\partial_i$.
	
	On the other hand, applying first $d^i$ and then $d^{j-1}$ yields
	\[
	d^i(x) = \sigma_{k-1}'^*\eta_i,\qquad d\eta_i = -(-1)^i\,\iota_{d\sigma_k(v_i)}\,d\alpha,
	\]
	\[
	d^{j-1} d^i(x) = \sigma_{k-2}^*\eta'_{ij},\qquad d\eta'_{ij} = -(-1)^{j-1}\,\iota_{d\sigma_{k-1}'(v_{j-1})}\,d\eta_i,
	\]
	where $\sigma_{k-1}' = \sigma_k\circ\partial_i$ and $v_{j-1}$ is the unit vector field on $\Delta^{k-1}$ normal to the $(j-1)$-st face.
	
	Using $\sigma_{k-1} = \sigma_k\circ\partial_j$, we have $d\sigma_{k-1}(v_i) = d\sigma_k\bigl(d\partial_j(v_i)\bigr)$. Set $(\partial_j)_* v_i := d\partial_j(v_i)$, which is a vector field on $\partial_j\Delta^k$. Then
	\[
	\iota_{d\sigma_{k-1}(v_i)}\,d\eta_j = -(-1)^j\,\iota_{d\sigma_k((\partial_j)_* v_i)}\iota_{d\sigma_k(v_j)}d\alpha 
	= -(-1)^j\,\iota_{d\sigma_k\bigl((\partial_j)_* v_i \wedge v_j\bigr)}d\alpha.
	\]
	Similarly,
	\[
	\iota_{d\sigma_{k-1}'(v_{j-1})}\,d\eta_i = -(-1)^i\,\iota_{d\sigma_k((\partial_i)_* v_{j-1})}\iota_{d\sigma_k(v_i)}d\alpha 
	= -(-1)^i\,\iota_{d\sigma_k\bigl((\partial_i)_* v_{j-1} \wedge v_i\bigr)}d\alpha.
	\]
	
	On $\Delta^k$, the commutativity of face inclusions $\partial_j\circ\partial_i = \partial_i\circ\partial_{j-1}$ (for $i<j$) implies that the composite maps give the same $(k-2)$-face $F = \partial_i(\partial_{j-1}\Delta^{k-1}) = \partial_j(\partial_i\Delta^{k-1})$. The bivector fields $(\partial_j)_* v_i \wedge v_j$ and $(\partial_i)_* v_{j-1} \wedge v_i$ both represent, in a neighborhood of $F$, the normal bivector to $F$ pointing into $\Delta^k$, and by antisymmetry of the wedge product,
	\[
	(\partial_j)_* v_i \wedge v_j = -(\partial_i)_* v_{j-1} \wedge v_i.
	\]
	Hence,
	\[
	\iota_{d\sigma_{k-1}(v_i)}\,d\eta_j = -(-1)^{j-i}\,\iota_{d\sigma_{k-1}'(v_{j-1})}\,d\eta_i.
	\]
	
	Now compute $d\eta_{ij}$ and $d\eta'_{ij}$:
	\[
	d\eta_{ij} = -(-1)^i\,\iota_{d\sigma_{k-1}(v_i)}\,d\eta_j 
	= (-1)^i\cdot(-1)^{j-i}\,\iota_{d\sigma_{k-1}'(v_{j-1})}\,d\eta_i 
	= (-1)^j\,\iota_{d\sigma_{k-1}'(v_{j-1})}\,d\eta_i,
	\]
	\[
	d\eta'_{ij} = -(-1)^{j-1}\,\iota_{d\sigma_{k-1}'(v_{j-1})}\,d\eta_i 
	= -(-1)^j\cdot(-1)^{-1}\,\iota_{d\sigma_{k-1}'(v_{j-1})}\,d\eta_i 
	= (-1)^j\,\iota_{d\sigma_{k-1}'(v_{j-1})}\,d\eta_i.
	\]
	Thus $d\eta_{ij} = d\eta'_{ij}$, which implies $d(\eta_{ij} - \eta'_{ij}) = 0$, i.e., $\eta_{ij} - \eta'_{ij}$ is a closed form. By definition of $\bm{\mathrm{sOb}}_{k-2}(M)$ (elements are equivalence classes of Hamiltonian forms modulo closed forms), $\eta_{ij}$ and $\eta'_{ij}$ represent the same equivalence class, and therefore
	\[
	d^i d^j(x) = d^{j-1} d^i(x) \quad \text{in } \bm{\mathrm{sOb}}_{k-2}(M).
	\]
	This verifies the face-face identities, so $\bm{\mathrm{sOb}}_\bullet(M)$ is a semi-simplicial set.
\end{proof}

\begin{theorem}\label{s4t1}
	\(\bm{\mathrm{sOb}}_\bullet(M)\) is a semi-Kan complex (i.e., a semi-simplicial set satisfying the Kan filling property).
\end{theorem}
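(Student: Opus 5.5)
We must show that for every $1\le k\le n$ and every $0\le m\le k$, a horn $\Lambda^k_m\to\bm{\mathrm{sOb}}_\bullet(M)$ admits a filler. That is, given compatible faces $x_i=\sigma_{k-1,i}^*\eta_i\in\bm{\mathrm{sOb}}_{k-1}(M)$ for $i\neq m$ satisfying $d^id^j=d^{j-1}d^i$ on the horn, we must produce $x=\sigma_k^*\alpha\in\bm{\mathrm{sOb}}_k(M)$ with $d^i x=x_i$ for all $i\ne m$.

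The plan has three steps, with a single exceptional case.

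\textbf{Step 1: the underlying simplex.} Fill the underlying horn of singular simplices. Since $\bm{\mathrm{Sing}}^\infty_\bullet(M)$ is Kan, there is a smooth $\sigma_k:\Delta^k\to M$ whose faces are the $\sigma_{k-1,i}$. We extend it to $\tau_k$ on $\Delta^k\times[0,1]^{n-k}$ as follows. Each face $x_i$ comes with auxiliary translations $v_1,\dots,v_{n-k},\tilde v_i$, where $\tilde v_i$ is the pushforward of the inward normal to the $i$-th face. Following the remark after Definition~\ref{s4d1}, we \emph{delete} the generating field $\tilde v_i$ of the inward normal direction and keep the common commuting family $v_1,\dots,v_{n-k}$. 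These are the images of the coordinate fields on the remaining $[0,1]$ factors, so they are pairwise commuting Hamiltonian vector fields by the Hamiltonian translation hypothesis. The face identities on the horn ensure that the different faces induce the same family on overlaps.

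\textbf{Step 2: the form $\alpha$.} Construct $\alpha$ on a contractible neighbourhood $U$ of $\tau_k(\Delta^k\times[0,1]^{n-k})$. By Remark~\ref{rem:heisenberg}, $v_1\wedge\cdots\wedge v_{n-k}$ lies in the higher Heisenberg algebra, and Lemma~\ref{s3l1} with vanishing brackets gives $d\iota_{v_1\wedge\cdots\wedge v_{n-k}}\omega=0$. The local Poincar\'e lemma then yields some $\alpha$ with $d\alpha=-\iota_{v_1\wedge\cdots\wedge v_{n-k}}\omega|_U$, so $x=\sigma_k^*\alpha\in\bm{\mathrm{sOb}}_k(M)$.

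\textbf{Step 3: matching the faces.} We must check $d^ix=x_i$ for $i\neq m$. By the face-map definition, $d^ix$ is represented by $\eta$ with
\[
d\eta=-(-1)^i\iota_{\tilde v_i}d\alpha=(-1)^i\iota_{v_1\wedge\cdots\wedge v_{n-k}\wedge\tilde v_i}\omega .
\]
This is exactly the Hamiltonian condition satisfied by $\eta_i$, possibly up to a sign. A sign discrepancy is absorbed by the orientation convention, i.e.\ by reordering the wedge as in the antisymmetry argument in the proof of the face identities. Hence $d(\eta-\eta_i)=0$. Since face values are only defined modulo closed forms (as used in the preceding theorem), $d^ix=x_i$ as classes.

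\textbf{The main obstacle} is making Step 1 coherent: the faces $x_i$ of the horn must determine a \emph{single} family of $n-k$ translations on all of $\Delta^k\times[0,1]^{n-k}$, with the normals $\tilde v_i$ appearing as pushforwards of the inward normals of one $\tau_k$. I would handle this by induction on the codimension of faces. The horn compatibilities $d^id^j=d^{j-1}d^i$, together with the bivector identity $(\partial_j)_*v_i\wedge v_j=-(\partial_i)_*v_{j-1}\wedge v_i$, force the auxiliary data on pairwise intersections $\partial_i\Delta^k\cap\partial_j\Delta^k$ to agree up to the order of the two normals. Commutativity then lets the flows be composed in either order, so the partial $\tau$'s glue to a single $\tau_k$.

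The missing face $d^mx$ is not prescribed. It is simply defined by the face-map formula, and it lands in $\bm{\mathrm{sOb}}_{k-1}(M)$ automatically because $\tilde v_m$ commutes with the $v_j$.

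For the top case $k=n$, the convention $\Omega^n_{\mathrm{Ham}}(U)=\{\theta|_U\}$ forces $\alpha=\theta$. We then check directly that $d\eta=(-1)^i\iota_{\tilde v_i}\omega$ matches the prescribed faces modulo closed forms, which is the same computation as in Step 3.
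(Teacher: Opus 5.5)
Your proposal follows essentially the same route as the paper's proof: fill the underlying horn in $\bm{\mathrm{Sing}}^\infty_\bullet(M)$, use the horn's edge compatibilities to extract a common commuting family of auxiliary Hamiltonian fields (each face's generators minus the inward normal), take a primitive of $-\iota_{v_1\wedge\cdots\wedge v_{n-k}}\omega$ as the filler, and match the faces modulo closed forms. There are two differences, one of which matters for your Step~3. The minor one is that the paper first reduces to a horn missing the $0$-th face, whereas you treat arbitrary $m$ directly. The substantive one is the sign: the paper fixes it in its Step~1 by asserting that edge compatibility forces $d\eta_i=(-1)^i\iota_{v_1\wedge\cdots\wedge v_{n-k}\wedge\tilde v_i}\omega$ on the given faces, and your Step~3 should be justified the same way, because reordering the wedge only produces an $i$-independent sign and cannot absorb the alternating factor $(-1)^i$.
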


\begin{proof}
	Let \(f:\Lambda^m_r \to \bm{\mathrm{sOb}}_\bullet(M)\) be a horn with missing \(r\)-th face,
	where \(0\le r\le m\) and \(1\le m\le n\).
	
	\medskip
	\noindent\textbf{Reduction to the case \(r=0\).}
	Let \(\rho:[m]\to[m]\) be the unique order-preserving bijection that sends \(r\) to \(0\) and
	preserves the cyclic order of the remaining vertices.  Concretely, \(\rho\) is the composition
	of the cyclic permutation \((0\;1\;\cdots\;r)\) with the standard identification of
	\([m]\setminus\{r\}\) and \([m-1]\).  The map \(\rho\) induces an isomorphism of the horn
	\(\Lambda^m_r\) with the horn \(\Lambda^m_0\) via precomposition with the induced map on
	simplices.  Because \(\bm{\mathrm{sOb}}_\bullet(M)\) is a semi-simplicial set, this
	isomorphism is compatible with the face maps.  Hence we may assume, without loss of
	generality, that the missing face is \(\partial_0\Delta^m\).
	
	\medskip
	\noindent Now let \(f:\Lambda^m_0\to\bm{\mathrm{sOb}}_\bullet(M)\) be a horn with missing
	\(0\)-th face.  Denote by \(\pi:\bm{\mathrm{sOb}}_\bullet(M)\to\bm{\mathrm{Sing}}^\infty_\bullet(M)\)
	the projection that forgets the Hamiltonian form.  Since \(\bm{\mathrm{Sing}}^\infty_\bullet(M)\)
	is a Kan complex, there exists a smooth singular \(m\)-simplex \(\sigma:\Delta^m\to M\) such
	that \(\pi\circ f = \sigma|_{\Lambda^m_0}\).  Hence for each \(i=1,\dots,m\) there is a
	Hamiltonian \((m-1)\)-form \(\alpha_i\) defined on a contractible open neighbourhood \(U\) of
	\(\sigma(\Delta^m)\) satisfying
	\[
	f_i = (\sigma\circ\partial_i)^*\alpha_i \in \bm{\mathrm{sOb}}_{m-1}(M),
	\]
	and by definition each \(\alpha_i\) comes with a set of \(n-m+1\) mutually commuting
	Hamiltonian vector fields that generate its exterior derivative.
	
	\medskip
	\noindent\textbf{Step 1: Common auxiliary vector fields from edge compatibility.}
	For any two faces \(i,j\in\{1,\dots,m\}\) with \(i<j\), their common \((m-2)\)-dimensional edge in the horn can be expressed in two ways: as the \(i\)-th face of \(f_j\), or as the \((j-1)\)-st face of \(f_i\).  The simplicial identities of the horn force these two \((m-2)\)-simplices to coincide modulo closed forms.  Taking exterior derivatives and using the definition of the face maps in \(\bm{\mathrm{sOb}}_\bullet(M)\), we obtain
	\[
	-(-1)^i\,\iota_{v_i^{(j)}}\,d\alpha_j \;=\; -(-1)^{j-1}\,\iota_{v_{j-1}^{(i)}}\,d\alpha_i
	\qquad\text{on the common edge},
	\]
	where \(v_i^{(j)}\) is the inward unit normal vector field to \(\partial_i\Delta^{m-1}\) within the face \(\partial_j\Delta^m\), and \(v_{j-1}^{(i)}\) is the inward unit normal vector field to \(\partial_{j-1}\Delta^{m-1}\) within the face \(\partial_i\Delta^m\).  
	
	Substituting \(d\alpha_k = -\iota_{V_k}\omega\) for each face, the equality becomes
	\[
	-(-1)^i\,\iota_{v_i^{(j)}}\,\iota_{V_j}\omega \;=\; -(-1)^{j-1}\,\iota_{v_{j-1}^{(i)}}\,\iota_{V_i}\omega .
	\]
	Using  \(\iota_{v}\iota_{V} = \iota_{V\wedge v}\), we obtain
	\[
	-(-1)^i\iota_{ V_j\wedge v_i^{(j)}}\,\omega \;=-(-1)^{j-1}\; \iota_{ V_i\wedge v_{j-1}^{(i)}}\,\omega
	\qquad\text{on the common edge}.
	\]
	By the non-degeneracy of \(\omega\) and because the generating vector fields are pushforwards of the coordinate vector fields on the auxiliary \([0,1]^{n-m+1}\) directions, the equality of the interior products implies that the multivector fields \(V_i\wedge v_{j-1}^{(i)}\) and \(V_j\wedge v_i^{(j)}\) coincide modulo \(\operatorname{Ker}(T^{n-m+1})\) (see Section~\ref{s2.4}). Since this holds for every pair of faces, we can choose representatives such that all faces share the same \(n-m\) auxiliary Hamiltonian vector fields, denoted by \(u_1,\dots,u_{n-m}\).  Writing \(w_k\) for the pushforward of the inward unit normal vector field of \(\partial_k\Delta^m\) (which is the remaining generator of \(\alpha_k\)), we obtain the uniform expression
	\[
	d\alpha_k = (-1)^k\,\iota_{u_1\wedge\cdots\wedge u_{n-m}\wedge w_k}\,\omega|_U \qquad (k=1,\dots,m).
	\]
	The sign \((-1)^k\) is forced by the edge compatibility condition; it compensates the Koszul signs that arise from the ordering of the face normals in the horn data.
	
	\medskip
	\noindent\textbf{Step 2: Construction of the filler.}
	The multivector field \(u_1\wedge\cdots\wedge u_{n-m}\) is Hamiltonian because all \(u_j\) are Hamiltonian and pairwise commuting.  Hence there exists a Hamiltonian \(m\)-form \(\beta\) on \(U\), unique up to closed forms, such that
	\[
	d\beta = -\iota_{u_1\wedge\cdots\wedge u_{n-m}}\,\omega|_U .
	\]
	Set \(\tilde f(\sigma) = \sigma^*\beta \in \bm{\mathrm{sOb}}_m(M)\).
	
	\medskip
	\noindent\textbf{Step 3: Verification of the face restrictions.}
	For each \(k=1,\dots,m\), let \(\tilde v_k\) be the pushforward along \(\sigma\) of the inward unit normal vector field of \(\partial_k\Delta^m\).  By construction, \(\tilde v_k = w_k\). The face map \(d^k\) applied to \(\tilde f(\sigma)\) requires solving
	\[
	d\eta = -(-1)^k\,\iota_{\tilde v_k}\,d\beta .
	\]
	We compute the right-hand side.  Using the definition of \(\beta\) and the commutativity of the generating vector fields,
	\[
	\iota_{\tilde v_k}\,d\beta
	= -\iota_{\tilde v_k}\iota_{u_1\wedge\cdots\wedge u_{n-m}}\omega
	= -\iota_{u_1\wedge\cdots\wedge u_{n-m}\wedge\tilde v_k}\,\omega
	= -\iota_{u_1\wedge\cdots\wedge u_{n-m}\wedge w_k}\,\omega .
	\]
	Now, from Step 1 we have \(\iota_{u_1\wedge\cdots\wedge u_{n-m}\wedge w_k}\,\omega = (-1)^k\,d\alpha_k\). Substituting this,
	\[
	\iota_{\tilde v_k}\,d\beta =-(-1)^k\,d\alpha_k .
	\]
	Therefore the face map equation becomes
	\[
	d\eta=d\alpha_k .
	\]
	Thus we may choose \(\eta = \alpha_k\), and obtain
	\[
	d^k(\tilde f(\sigma)) = (\sigma\circ\partial_k)^*\eta = (\sigma\circ\partial_k)^*\alpha_k = f_k .
	\]
	
	Therefore \(\tilde f\) fills the horn \(f\).  Since \(f\) was arbitrary,
	\(\bm{\mathrm{sOb}}_\bullet(M)\) satisfies the Kan condition.
\end{proof}

\begin{remark}
	We have explicitly constructed only the semi-simplicial set $\bm{\mathrm{sOb}}_{\le n}(M)$ up to dimension $n$, and verified the face-face identities as well as the filling property for all horns of dimension $\le n$. This suffices to define an $n$-groupoid (i.e., an $(n,0)$-category), and all subsequent constructions (extended field theory, cohomology, pre-$n$-Hilbert space, polarizations) involve only simplices of dimension $\le n$; hence the $n$-truncated version is already complete.
	
	If one wishes to obtain an $\infty$-groupoid defined in all dimensions, one may freely generate a semi-simplicial set from the $n$-truncated version by left Kan extension, where higher simplices are all degenerate. By the Theorem~\ref{thm:unique-deg}, this semi-simplicial set satisfies the Kan condition and thus becomes a Kan complex. This extension procedure is standard and does not alter the low-dimensional data. Therefore, without risk of confusion, we continue to regard $\bm{\mathrm{sOb}}_\bullet(M)$ as an $\infty$-groupoid and use the language of simplicial sets. The main results of the present paper do not depend on the explicit construction of the higher-dimensional data.
\end{remark}

With this simplicial model of observables at hand, we can now formulate extended field theories in analogy with state-sum constructions. Just as in state-sum constructions for topological quantum field theory one assigns topological charges to a triangulation, we define an extended field theory by assigning observables to the singular simplicial complex of spacetime.

\begin{definition}
	An $n$-dimensional extended field theory on $\Sigma$ valued in $(M,\omega)$ is a simplicial map
	\[
	F: \bm{\mathrm{Sing}}^\infty_\bullet(\Sigma) \longrightarrow \bm{\mathrm{sOb}}_\bullet(M),
	\]
	where $\bm{\mathrm{Sing}}^\infty_\bullet(\Sigma)$ is the smooth singular simplicial set of $\Sigma$, and $\bm{\mathrm{sOb}}_\bullet(M)$ is the simplicial set of observables on $(M,\omega)$ constructed above. The functor sends each singular simplex $\sigma$ to its assigned Hamiltonian form $\eta_\sigma$.
\end{definition}

The simplicial set $\bm{\mathrm{sOb}}_\bullet(M)$ can be viewed as the moduli stack of extended field theories on $\Sigma$. Indeed, homotopy classes of simplicial maps $F:\bm{\mathrm{Sing}}^\infty_\bullet(\Sigma)\to\bm{\mathrm{sOb}}_\bullet(M)$ classify equivalence classes of theories that can be continuously deformed into one another. Thus $\bm{\mathrm{sOb}}_\bullet(M)$ itself plays the role of a ``moduli space of theories," and its geometric realization $|\bm{\mathrm{sOb}}_\bullet(M)|$ becomes the classifying space of the theory. Topological information during deformations is captured by the homotopy classes of the simplicial map $F$. In particular, theories with a symmetry group $G$ should be described by $\bm{\mathrm{B}}G$-twisted homotopy theory. Consequently, studying equivalence classes of theories valued in $M$ essentially amounts to studying the homotopy type of the simplicial set $\bm{\mathrm{sOb}}_\bullet(M)$ and its geometric realization. This viewpoint is of fundamental importance for the classification of physical phases: phase transitions between different physical phases are often accompanied by jumps in homotopy invariants, and the homotopy type of $\bm{\mathrm{sOb}}_\bullet(M)$ provides a precise mathematical characterization of such jumps.

In the discussion above, if one interprets the exterior derivative of a $k$-form Hamiltonian as a $k$-plectic structure, one obtains a field theory on a worldvolume modeled by a smooth singular $k$-simplex. In this sense, an $n$-dimensional field theory encoded by an $n$-plectic manifold $(M,\omega)$ naturally encodes transformations between its boundary $(n-1)$-dimensional field theories, which themselves are described by an $(n-1)$-plectic structure of the form $d\alpha$, where $\alpha$ is an $(n-1)$-form Hamiltonian on $(M,\omega)$. Consequently, symmetries of a given field theory are realized as field theories in one dimension higher, a perspective further developed in \cite{FMT2024, KLW}.

\subsubsection{Morphisms and the $n$-vector space structure}\label{s412}

Geometrically, a morphism in $\bm{\mathrm{sOb}}_\bullet(M)$ is represented by a $1$-simplex, encoding a relation between observables. More systematically, such morphisms are organized by the simplicial set $\bm{\mathrm{sOb}}_\bullet(\bm{\mathrm{DiffMan}}(\Delta^1,M))$. By the natural exponential adjunction
\[
\bm{\mathrm{DiffMan}}(\Delta^k,\bm{\mathrm{DiffMan}}(\Delta^1,M)) \;\cong\; \bm{\mathrm{DiffMan}}(\Delta^k\times\Delta^1,M),
\]
we define $\bm{\mathrm{sOb}}_\bullet(\bm{\mathrm{DiffMan}}(\Delta^1,M))$ by
\[
\bm{\mathrm{sOb}}_k(\bm{\mathrm{DiffMan}}(\Delta^1,M)) \;:=\;
\left\{ \tau_k^*\alpha \;\Bigg|\;
\begin{aligned}
	&\tau_k : \Delta^k \times \Delta^1 \to M \text{ is a smooth map},\\
	&\alpha \in \Omega_{\mathrm{Ham}}^{k+1}(M)
\end{aligned}
\right\},
\]
where $\tau_k^*\alpha$ denotes the pullback of $\alpha$ along $\tau_k$, regarded as a $(k+1)$-form on $\Delta^k \times \Delta^1$. The simplicial structure on $\bm{\mathrm{sOb}}_\bullet(\bm{\mathrm{DiffMan}}(\Delta^1,M))$ is defined analogously to that of $\bm{\mathrm{sOb}}_\bullet(M)$.

Observe that $\Delta^k\times\Delta^1$ is canonically isomorphic to the $(k+1)$-simplex $\Delta^{k+1}$. Under such an isomorphism, a smooth singular $k$-simplex $\tau_k:\Delta^k\times\Delta^1\to M$ corresponds uniquely to a smooth singular $(k+1)$-simplex $\sigma_{k+1}:\Delta^{k+1}\to M$. For a fixed $t\in\Delta^1$, the restriction $\tau_k|_{\Delta^k\times\{t\}}$ corresponds, via the same isomorphism, to the restriction of $\sigma_{k+1}$ to the $k$-dimensional face of $\Delta^{k+1}$ corresponding to $\Delta^k\times\{t\}$. Hence the pullback $\tau_k^*\alpha$ encodes a one-parameter family of $(k+1)$-forms on $\Delta^k$, each given by $(\tau_k|_{\Delta^k\times\{t\}})^*\alpha$.

More systematically, there is a bijection
\[
\bm{\mathrm{sOb}}_k(\bm{\mathrm{DiffMan}}(\Delta^1,M)) \;\cong\; \bm{\mathrm{sOb}}_{k+1}(M),
\]
which follows from the standard exponential adjunction \(\Delta^k\times\Delta^1 \cong \Delta^{k+1}\) together with the fact that the auxiliary directions are simply rearranged under this identification.  This bijection reflects the geometric fact that a \(k\)-parameter family of paths in \(M\) is equivalent to a \((k+1)\)-dimensional singular simplex.  In particular, \(\bm{\mathrm{sOb}}_\bullet(\bm{\mathrm{DiffMan}}(\Delta^1,M))\) is a shifted version of \(\bm{\mathrm{sOb}}_{\bullet+1}(M)\), a fact used implicitly throughout.

A $k$-simplex $\tau_k^*\beta$ in $\bm{\mathrm{sOb}}_\bullet(\bm{\mathrm{DiffMan}}(\Delta^1,M))$ determines a $k$-simplex in $\bm{\mathrm{sOb}}_\bullet(M)$ by restriction to a fixed $t\in\Delta^1$; in particular, restricting to $t=0$ and $t=1$ yields two $k$-simplices $\alpha_0,\alpha_1\in\bm{\mathrm{sOb}}_\bullet(M)$. The relationship between $\beta$ and these endpoint simplices is captured by the differential condition
\[
d\alpha_t = -\iota_v d\beta,
\]
where $v_t$ is a vector field tangent to the path $(\tau_k)_a:\Delta^1\to M$ (for a fixed $a\in\Delta^k$), and $\alpha_t$ denotes the restriction of $\beta$ at $t$. More concretely, a $k$-simplex $\sigma_k^*\alpha$ in $\bm{\mathrm{sOb}}_\bullet(M)$ is related to a $k$-simplex $\tau_k^*\beta$ in $\bm{\mathrm{sOb}}_\bullet(\bm{\mathrm{DiffMan}}(\Delta^1,M))$ via this equation, reflecting the idea that a morphism corresponds to a continuous family of observables parametrized by a path in $M$, or equivalently, to the time evolution of an observable.

\begin{remark}
	By definition, $\bm{\mathrm{sOb}}_\bullet(M)$ is an $(n-1)$-truncated semi-simplicial set: the only non-degenerate simplex in degree $n$ is $\sigma^*\theta$, and there are no non-degenerate simplices above degree $n$. Morphisms in this setting are encoded by the path-space semi-simplicial set $\bm{\mathrm{sOb}}_\bullet(\bm{\mathrm{DiffMan}}(\Delta^1,M))$, which is naturally isomorphic to $\bm{\mathrm{sOb}}_{\bullet+1}(M)$ and is therefore $(n-2)$-truncated.
	
	Theorem~\ref{s4t1} establishes that $\bm{\mathrm{sOb}}_{\bullet+1}(M)$ satisfies the Kan filling condition and thus forms a semi-Kan complex. Its linearization $\mathbb{R}[\bm{\mathrm{sOb}}_{\bullet+1}(M)]$ carries a natural monoidal structure inherited from the composition of paths. Consequently, the linear span $\mathbb{R}[\bm{\mathrm{sOb}}_\bullet(M)]$ can be regarded as a category enriched over $\mathbb{R}[\bm{\mathrm{sOb}}_{\bullet+1}(M)]$, which is precisely the defining structure of an $n$-vector space. In fact, the $n$-vector space obtained in this way is the dual of the one constructed in Section~\ref{s42}.
\end{remark}

\subsubsection{Cohomological invariants and higher structures}\label{s413}
As an integral part of this simplicial homotopy framework, we can extract invariants of observables via cohomology. The recursive gluing conditions are encoded in the face maps of $\bm{\mathrm{sOb}}_\bullet(M)$. Consequently, the obstructions to gluing---i.e., the failure of local Hamiltonian forms to patch into a global object---are captured by the cohomology of a naturally defined cochain complex.

\begin{definition}[Chain and cochain complexes of observables]
	First define the chain complex $\bm{\mathrm{C}}_\bullet^{\mathrm{obs}}(M)$ by
	\[
	\bm{\mathrm{C}}_k^{\mathrm{obs}}(M) := \mathbb{Z}[\bm{\mathrm{sOb}}_k(M)],
	\]
	with boundary operator $\partial_k: \bm{\mathrm{C}}_k^{\mathrm{obs}}(M) \to \bm{\mathrm{C}}_{k-1}^{\mathrm{obs}}(M)$ given by
	\[
	\partial_k(\sigma) = \sum_{i=0}^{k} (-1)^i d_i\sigma,
	\]
	where $d_i$ are the face maps of $\bm{\mathrm{sOb}}_\bullet(M)$. The homology of this complex is denoted $H_\bullet^{\mathrm{obs}}(M)$.
	
	Dually, define the cochain complex $\bm{\mathrm{Ch}}^\bullet_{\mathrm{obs}}(M)$ by
	\[
	\bm{\mathrm{Ch}}^k_{\mathrm{obs}}(M) := \mathrm{Hom}_{\mathbb{Z}}(\mathbb{Z}[\bm{\mathrm{sOb}}_k(M)], \mathbb{K}),
	\]
	with coboundary operator $\delta^k: \bm{\mathrm{Ch}}^k_{\mathrm{obs}}(M) \to \bm{\mathrm{Ch}}^{k+1}_{\mathrm{obs}}(M)$ given by
	\[
	(\delta^k f)(\sigma) = \sum_{i=0}^{k+1} (-1)^i f(d_i\sigma).
	\]
	The cohomology of this complex is denoted $H^\bullet_{\mathrm{obs}}(M)$.
	
	By construction, $\bm{\mathrm{Ch}}^\bullet_{\mathrm{obs}}(M)$ and $\bm{\mathrm{C}}_\bullet^{\mathrm{obs}}(M)$ are linear duals of each other, and the coboundary operator $\delta$ satisfies $\delta f(\sigma) = f(\partial \sigma)$.
\end{definition}

The cohomology $H^k_{\mathrm{obs}}(M)$ of this complex thus serves as a candidate for encoding obstructions to gluing Hamiltonian forms. These cochains can also be interpreted as observables valued in $\mathbb{K}$; then the cohomology captures information invariant under continuous deformation and can be used to classify observables. Indeed, the degree-$n$ cohomology corresponds precisely to the adiabatic invariant $\int_{\Delta^n} \phi^*\theta$, where $\phi: \Delta^n \to M$ and $d\theta = \omega$.

Beyond the cohomology groups themselves, the cochain complex $(\bm{\mathrm{Ch}}^\bullet_{\mathrm{obs}}(M), \delta)$ carries a richer algebraic structure: the cup product endows it with a natural $E_\infty$-algebra structure, reflecting the higher homotopy coherence of the product on cochains. This structure is expected to capture finer quantum information, such as the braiding statistics of quasiparticles or extended objects.

Finally, we turn to the interpretation of quantum states, which will be discussed within the framework of Section~\ref{s42}. Quantum states are defined via $\mathrm{Map}(\bm{\mathrm{sOb}}_\bullet(M),U(1))$, i.e., as complex-valued cochains on the simplicial set of observables. In this picture, the cohomology of $\bm{\mathrm{Ch}}^\bullet_{\mathrm{obs}}(M)$ classifies the continuous deformation classes of quantum states: a non-trivial cohomology class corresponds to a topological excitation, while the information of topological phases is encoded in the algebraic structure of the cocycles. This provides a unified cohomological perspective on both classical obstructions and quantum topological phenomena.

\subsubsection{Relation to $L_\infty$-algebras}
\label{s44}

The semi-simplicial set $\bm{\mathrm{sOb}}_\bullet(M)$ faithfully represents the totality of Hamiltonian forms on $M$. However, its face maps naturally close only on the subalgebra generated by mutually commuting Hamiltonian vector fields, i.e., the higher Heisenberg algebra of Remark~\ref{rem:heisenberg}. The full non-abelian $L_\infty$-algebra $L$ constructed in Theorem~\ref{thmmain} should therefore be interpreted as the symmetry algebra that is expected to act on $\bm{\mathrm{sOb}}_\bullet(M)$, rather than as a structure directly carried by the simplicial set itself.

A complete realisation of this action encounters a fundamental geometric obstruction. Each $k$-simplex of $\bm{\mathrm{sOb}}_\bullet(M)$ consists of two intertwined data: an underlying singular simplex $\sigma_k$ and a Hamiltonian form $\alpha$. While the action on the Hamiltonian form can be expressed algebraically via the higher Poisson brackets, the action on the singular simplex would require a geometric deformation of $\sigma_k$ along the Hamiltonian multivector field $v_x$ associated with an element $x\in L$, producing a new singular simplex whose dimension reflects the degree shift of the bracket. This geometric deformation cannot be captured by the purely combinatorial face maps of the semi-simplicial set and remains an open problem. A systematic construction of such an $L_\infty$-module structure on $\bm{\mathrm{sOb}}_\bullet(M)$, and its dual action on the quantum state space, is left for future investigation.

\section{Relation to $n$-gerbes  and quantization}
\label{s5}

\subsection{Simplicial observables as a combinatorial $(n-1)$-gerbe}

An $n$-gerbe with connection on $M$ with band $U(1)$ is described by \v{C}ech--Deligne data relative to a good open cover $\{U_i\}$. For an $n$-plectic manifold $(M,\omega)$, the integrality condition $[\omega/2\pi]\in H^{n+1}(M,\mathbb{Z})$ is precisely the condition for the existence of such a gerbe whose curvature is $\omega$. We now show that, under this prequantum condition, $\bm{\mathrm{sOb}}_\bullet(M)$ provides a combinatorial model of the corresponding $n$-gerbe. We illustrate this for $n=2$.

Take a good open cover $\{U_i\}$ of $M$ with $Y = \coprod_i U_i$. The $0$-, $1$-, and $2$-simplices of $\bm{\mathrm{sOb}}_\bullet(M)$ encode the local gerbe data as follows:
\begin{itemize}
	\item $0$-simplex $\sigma_0^*\alpha_0$: $\sigma_0$ is a point in $U_i$, giving a point in the base space $Y$.
	\item $1$-simplex $\sigma_1^*\alpha_1$ with $\sigma_1: \Delta^1 \to U_{ij}$: the Hamiltonian $1$-form $\alpha_1$ ($d\alpha_1 = -\iota_{v_{\alpha_1}}\omega$) encodes the transition data on the double intersection. Its path integral
	\[
	g_{ij} = e^{i\int_{\Delta^1}\sigma_1^*\alpha_1}
	\]
	defines the $U(1)$-valued transition function on $U_{ij}$, which relates the local frames on $U_i$ and $U_j$ for the line bundle $L$ on $Y^{[2]}$.
	\item $2$-simplex $\sigma_2^*\theta$ with $\sigma_2: \Delta^2 \to U_{ijk}$ and $d\theta = \omega$: the local primitive $\theta$ defines the $2$-cocycle
	\[
	c_{ijk} = e^{i\int_{\Delta^2}\sigma_2^*\theta}.
	\]
	(When verifying that $c_{ijk}$ is well-defined, one introduces local primitives $\theta_i$ on each $U_i$ and transition $1$-forms $\lambda_{ij}$ with $\theta_j - \theta_i = d\lambda_{ij}$; the integrality condition then guarantees independence of all choices.)
\end{itemize}
The Kan filler for the horn formed by $g_{ij}$ and $g_{jk}$ produces a $2$-simplex $\sigma_2^*\theta$; from this filler we obtain a $U(1)$-valued function $c_{ijk}=e^{i\int_{\Delta^2}\sigma_2^*\theta}$ and a third edge $g_{ik}$ (the image of the face opposite the vertex $j$). We can view the filler itself as a mapping that sends the ordered pair $(g_{ij},g_{jk})$ to $g_{ik}$. This mapping, denoted by the same symbol $c_{ijk}$, is defined by the Kan construction:
\[
c_{ijk}(g_{ij},g_{jk}) = g_{ik},
\]
where the right-hand side is the transition function extracted from the third face of the filler. In this picture, this means the multiplication map $c_{ijk}: L_{ij}\otimes L_{jk}\to L_{ik}$ on triple intersections satisfies $c_{ijk}(g_{ij}\otimes g_{jk}) =  g_{ik}$ for any local sections $g_{ij}, g_{jk}, g_{ik}$. Thus the Kan filling property directly yields the gerbe multiplication.

For these data to constitute a gerbe, the transition functions must satisfy the associativity (cocycle) condition on each quadruple intersection $U_{ijkl}$:
\[
c_{jkl}\, c_{ikl}^{-1}\, c_{ijl}\, c_{ijk}^{-1} = 1.
\]
The face maps of $\bm{\mathrm{sOb}}_\bullet(M)$ realise the restriction of a $2$-simplex to its three boundary $1$-simplices, so that the above condition is encoded in the simplicial structure. By Stokes' theorem,
\[
c_{jkl}\, c_{ikl}^{-1}\, c_{ijl}\, c_{ijk}^{-1}
= e^{i\int_{\partial\Delta^3}\sigma_3^*\theta}
= e^{i\int_{\Delta^3}\sigma_3^*\omega}.
\]
Thus the associativity condition holds for all $3$-simplices if and only if $\int_{\Delta^3}\sigma_3^*\omega \in 2\pi\mathbb{Z}$ for every $3$-simplex $\sigma_3$. Since any closed $3$-chain can be represented by integral combinations of $3$-simplices, this is equivalent to the prequantum condition $[\omega/2\pi]\in H^3(M,\mathbb{Z})$. Hence the prequantum condition is necessary for $\bm{\mathrm{sOb}}_\bullet(M)$ to define a gerbe.

Conversely, assume $[\omega/2\pi]\in H^3(M,\mathbb{Z})$. Choose contractible $U_i$ and local primitives $\theta_i$ with $d\theta_i = \omega|_{U_i}$. On $U_{ij}$, pick $\lambda_{ij}$ with $d\lambda_{ij} = \theta_j - \theta_i$. For any closed $2$-chain $S \subset U_{ij}$, take $3$-chains $B_i \subset U_i$, $B_j \subset U_j$ with $\partial B_i = \partial B_j = S$. Applying the prequantum condition to the closed $3$-chain $B_i - B_j$ gives
\[
\int_S d\lambda_{ij} = \int_S (\theta_j - \theta_i) = \int_{B_j}\omega - \int_{B_i}\omega \in 2\pi\mathbb{Z}.
\]
Hence $c_{ijk} = e^{i\int_{\Delta^2}\sigma_2^*\theta}$ is a well-defined $U(1)$-valued function depending only on the homotopy class of $\sigma_2$. A standard \v{C}ech--de Rham argument shows $\delta c = 1$ on $U_{ijkl}$, so the $\{c_{ijk}\}$ form a $2$-cocycle. Together with $\{\theta_i\}$ and $\{\lambda_{ij}\}$, these data constitute a bundle gerbe with connection whose curvature is $\omega$. Thus the prequantum condition is also sufficient.

For general $n$, the $k$-simplices of $\bm{\mathrm{sOb}}_\bullet(M)$ ($0 \le k \le n$) supply the local $k$-form components of the \v{C}ech--Deligne data of an $n$-gerbe with connection. The Kan property (Theorem~\ref{s4t1}) guarantees the coherence of all higher cocycle conditions, and the prequantum condition $[\omega/2\pi]\in H^{n+1}(M,\mathbb{Z})$ is equivalent to the closedness of the top-degree \v{C}ech--Deligne cocycle.

\subsection{Pre-$n$-Hilbert spaces}\label{s42}

In symplectic geometry, sections of the prequantum line bundle form a pre-Hilbert space. In simplicial terms, $0$-cochains give the functions, while $1$-cochains provide the phase $e^{i\int\theta}$ along paths. For an $n$-plectic manifold $(M,\omega)$, we encode quantum states as follows.

\begin{definition}[Cosimplicial set of quantum states]
	Let $\bm{\mathrm{sOb}}_\bullet(M)$ be the simplicial set of observables. Define the cosimplicial set $\bm{\mathrm{qOb}}^\bullet(M)$ by
	\[
	\bm{\mathrm{qOb}}^k(M) := \mathrm{Map}(\bm{\mathrm{sOb}}_k(M), U(1)),
	\]
	with cosimplicial structure induced from $\bm{\mathrm{sOb}}_\bullet(M)$.
\end{definition}

The $U(1)$-module $U(1)[\bm{\mathrm{qOb}}^\bullet(M)]$, equivalently the cochain complex $\bm{\mathrm{Ch}}^\bullet_{\mathrm{obs}}(M)$ of Section~\ref{s413}, is the pre-$n$-Hilbert space. Its $n$-vector space structure is the quantum counterpart of Section~\ref{s412}.

To introduce a categorified inner product, we follow the analogy with quantum mechanics. There, the inner product on a symplectic manifold $(M,\omega)$ is the phase space path integral
\[
\langle \psi_f | \psi_i \rangle = \int_{M\times M}dxdy\int \mathcal{D}[\gamma] \, \overline{\psi_f(\gamma(1))} \, e^{i\int_0^1 \gamma^*\theta} \, \psi_i(\gamma(0)),
\]
where $\gamma$ is a path from $y$ to $x$ and $d\theta=\omega$. The factor $e^{i\int\theta}$ is a $U(1)$-valued $1$-cocycle on $\bm{\mathrm{sOb}}_\bullet(M)$, the cocycle condition following from Stokes' theorem and integrality of $\omega$.

For an $n$-plectic manifold, a path in the mapping space $\bm{\mathrm{DiffMan}}(\Delta^k,M)$ describes the evolution of a $k$-dimensional extended object. We take as integral kernel a $U(1)$-valued $1$-cocycle on the path space:
\[
e^{i\mathcal{K}} \in \bm{\mathrm{Ch}}_{\mathrm{obs}}^1\bigl(\bm{\mathrm{DiffMan}}(\Delta^1, \bm{\mathrm{DiffMan}}(\Delta^k,M)), U(1)\bigr), \qquad \delta e^{i\mathcal{K}} = 1.
\]
For $\psi_f^\bullet,\psi_i^\bullet \in \bm{\mathrm{qOb}}^\bullet(M)$, the inner product at level $k$ is the formal path integral
\[
\langle \psi_f^k | \psi_i^k \rangle_{e^{i\mathcal{K}}} := \int \mathcal{D}[\sigma_f^*\alpha_f] \mathcal{D}[\sigma_i^*\alpha_i] \int \mathcal{D}[\gamma] \; \overline{\psi_f^k(\sigma_f^*\alpha_f)} \, e^{i\mathcal{K}(\gamma)} \, \psi_i^k(\sigma_i^*\alpha_i),
\]
where integrals are formal sums over the indicated sets and $\gamma(0)=\sigma_f^*\alpha_f$, $\gamma(1)=\sigma_i^*\alpha_i$.

Via the adjunction $\bm{\mathrm{DiffMan}}(\Delta^1, \bm{\mathrm{DiffMan}}(\Delta^k,M)) \cong \bm{\mathrm{DiffMan}}(\Delta^{k+1},M)$, the cocycle $e^{i\mathcal{K}}$ pulls back to a $(k+1)$-cocycle on $\bm{\mathrm{sOb}}_\bullet(M)$, still denoted $e^{i\mathcal{K}}$, with $\delta e^{i\mathcal{K}} = 1$. The boundary condition becomes $\partial(\sigma_{k+1}^*\beta) = \sigma_f^*\alpha_f - \sigma_i^*\alpha_i$. The inner product rewrites as
\[
\langle \psi_f^k | \psi_i^k \rangle_{e^{i\mathcal{K}}} = \int \mathcal{D}[\sigma_f^*\alpha_f] \mathcal{D}[\sigma_i^*\alpha_i] \int \mathcal{D}[\sigma_{k+1}] \overline{\psi_f^k(\sigma_f^*\alpha_f)} \; e^{i\mathcal{K}(\sigma_{k+1}^*\beta)} \; \psi_i^k(\sigma_i^*\alpha_i).
\]

The cocycle condition forces $e^{i\mathcal{K}(\sigma_{k+1}^*\beta)}$ to depend only on the homotopy class of the interpolating simplex: any two such simplices with the same boundary can be capped by a $(k+2)$-simplex, and $\delta e^{i\mathcal{K}} = 1$ implies their phases coincide. The pairing takes values in $U(1)[\bm{\mathrm{qOb}}^{\bullet+1}(M)]$.

\begin{remark}[From formal sum to finite sum and the hierarchy of functional integrals]
	The inner sum over interpolating $(k+1)$-simplices collapses to a finite sum over homotopy classes because the integral kernel $e^{i\mathcal{K}}$ is a cocycle.  The outer integrals over the boundary states $\sigma_f^*\alpha_f$ and $\sigma_i^*\alpha_i$ are formal functional integrals.  Their meaning is best understood through the hierarchy of the theory: for $n=2$ and $k=0$ the boundary states are functions on points of $M$ and the inner product reduces to an ordinary integral; for $n=2$ and $k=1$ the boundary states are functions on paths and the outer integrals become functional integrals over $\bm{\mathrm{DiffMan}}(\Delta^1,M)$, analogous to the inner product of states in the geometric quantisation of loop spaces.  In the latter setting, such functional integrals are typically made rigorous by constructing a measure compatible with the symplectic structure (e.g., via a Wiener measure in the K\"ahler case, or heat kernel measures on compact Lie groups).  The present simplicial framework is a smooth model; when it is restricted to a finite triangulation of $M$, the formal functional integrals reduce to genuine finite combinatorial sums, yielding a well-defined complex number without any auxiliary metric data.
\end{remark}

\begin{remark}[Anomaly cancellation, Wess-Zumino terms, and Wilson membranes]
	A \(k\)-dimensional boundary theory described by a quantum state \(\psi^k\) is generically anomalous.  As the boundary of a \((k+1)\)-dimensional bulk, the boundary state \(\psi^k\) is obtained by fixing a bulk extension of the background gerbe data.  Different choices of the bulk extension correspond precisely to different gauge choices for \(\psi^k\); under a change of the bulk extension the boundary state transforms by a multiplicative phase, which is the signature of an anomaly.  The cocycle \(e^{i\mathcal{K}}\) entering the inner product is the Wess-Zumino term that compensates this anomaly via the anomaly inflow mechanism: its variation with respect to the bulk extension cancels the variation of the boundary state, so that the combined bulk-boundary amplitude is gauge-invariant.  The cocycle condition \(\delta e^{i\mathcal{K}} = 1\) is the algebraic statement of this invariance.  Equivalently, \(e^{i\mathcal{K}}\) is a Wilson membrane whose endpoints are the charged quantum states \(\psi^k\), and the inner product \(\langle \psi_f^k \mid \psi_i^k \rangle_{e^{i\mathcal{K}}}\) is the anomaly-free amplitude for this Wilson membrane to interpolate between the two boundary configurations.
\end{remark}

\begin{remark}[Relation to gerbe-section \(2\)-Hilbert spaces and higher-rank generalisations]
	In the \(2\)-plectic case, Bunk--S\"amann--Szab\'o~\cite{BS} defined a \(2\)-Hilbert space whose objects are twisted vector bundles of arbitrary finite rank over the prequantum gerbe.  The cosimplicial space \(U(1)[\bm{\mathrm{qOb}}^\bullet(M)]\) introduced in the present paper encodes, by contrast, the local sections of twisted line bundles over the same gerbe---the rank-\(1\) subsector of their construction.  This is the natural higher categorical analogue of the fact that, in ordinary geometric quantisation, quantum states are sections of the prequantum line bundle rather than the bundle itself.
	
	To promote this to higher-rank twisted vector bundles, and in particular to the \(2\)-vector bundles of Kristel, Ludewig and Waldorf~\cite{Wal2021}, one may replace the coefficient object \(U(1)\) by the symmetric monoidal \(2\)-category \(\bm{\mathrm{Alg}}\) whose objects are associative algebras, \(1\)-morphisms are bimodules, and \(2\)-morphisms are intertwiners.  The cosimplicial set \(\bm{\mathrm{qOb}}^\bullet(M)\) then becomes a cosimplicial object in this \(2\)-category, and its linearisation can be understood as a combinatorial model for the \(2\)-vector space of sections of a \(2\)-line bundle.  A detailed investigation of this enriched setting is left for future work.
\end{remark}

\subsection{Hierarchy of polarizations}\label{s43}

The categorified inner product introduced in Section~\ref{s42} is governed by a cocycle $e^{i\mathcal{K}}$ on the space of paths in the mapping space $\mathcal{P}_k := \bm{\mathrm{DiffMan}}(\Delta^k,M)$. We now explain how this algebraic datum induces a natural symplectic structure on $\mathcal{P}_k$, and how the cocycle condition translates into a prequantum condition for that symplectic form. This provides the geometric foundation for a recursive polarization scheme that systematically reduces the pre-$n$-Hilbert space to the physical $n$-Hilbert space.

The variation of the functional $\mathcal{K}$ with respect to a path defines a $1$-form $\Theta_{\mathcal{K}}$ on $\mathcal{P}_k$. For a point $\sigma \in \mathcal{P}_k$ and a tangent vector $X \in T_\sigma \mathcal{P}_k$, let $\gamma_\varepsilon$ be a path in $\mathcal{P}_k$ with $\gamma_0 = \sigma$ and $\left.\frac{d}{d\varepsilon}\right|_{\varepsilon=0} \gamma_\varepsilon = X$. Set
\[
\Theta_{\mathcal{K}}|_{\sigma}(X) := \left. \frac{d}{d\varepsilon} \right|_{\varepsilon=0} \mathcal{K}(\gamma_\varepsilon).
\]
Its exterior derivative $\Omega_{\mathcal{K}} := d \Theta_{\mathcal{K}} \in \Omega^2(\mathcal{P}_k)$ is a symplectic form. When $\mathcal{K}$ arises from a Hamiltonian form $\beta$ on $M$, $\Theta_{\mathcal{K}}$ is the integral of $\beta$ over $\Delta^k$, and $\Omega_{\mathcal{K}}$ reduces to the transgressed form $\int_{\Delta^k} \mathrm{ev}_k^* d\beta$.

The algebraic cocycle condition $\delta e^{i\mathcal{K}} = 1$ on the path space has a direct geometric interpretation on $\mathcal{P}_k$. For any closed loop $\ell$ in $\mathcal{P}_k$, the cocycle condition applied to a bounding $2$-simplex implies $\exp( i \oint_\ell \Theta_{\mathcal{K}} ) = 1$, which is equivalent to the integrality of $\Omega_{\mathcal{K}}$:
\[
[\Omega_{\mathcal{K}} / 2\pi] \in H^2(\mathcal{P}_k, \mathbb{Z}).
\]
Thus the choice of a cocycle as the integral kernel automatically makes each $(\mathcal{P}_k, \Omega_{\mathcal{K}})$ a prequantizable symplectic manifold.

Since $(\mathcal{P}_k, \Omega_{\mathcal{K}})$ is prequantizable, we may apply geometric quantization. A polarization is a Lagrangian submanifold $\mathcal{L}_k \subset \mathcal{P}_k$; the physical quantum state space consists of elements of $U(1)[\bm{\mathrm{qOb}}^k(M)]$ covariantly constant transverse to $\mathcal{L}_k$. The inner product pairs states in the same Lagrangian submanifold, reducing the prequantum space to the physical state space.

The hierarchy proceeds recursively from the top dimension downward. At level $k=n$, $U(1)[\bm{\mathrm{qOb}}^n(M)]$ is interpreted as the partition function on $\Delta^n$, while $U(1)[\bm{\mathrm{qOb}}^{n-1}(M)]$ gives the quantum states on its boundary. A polarization on $\mathcal{P}_{n-1}$ yields the wave function space familiar from ordinary field theory quantization. Repeating this from $k=n$ down to $0$, each step performs a polarization: one conjugate momentum direction is eliminated, and the transverse motion becomes a new positional degree of freedom. The form degree decreases from $k$ to $k-1$, capturing both dimensional reduction and the redistribution of degrees of freedom.

The resulting hierarchy is compatible with the \(1\)-polarization classification of~\cite{r2011}, according to which an \(n\)-plectic manifold admits \(n\) distinct orders of polarization.  To see this, recall that in the De Donder--Weyl formulation of classical field theory, the multisymplectic potential locally takes the form \(\Theta = p^\mu \wedge d\phi \wedge d^{n-1}x_\mu\), where \(\phi\) denotes the field, \(p^\mu\) its conjugate polymomenta, and \(d^{n-1}x_\mu\) the volume element on a codimension-\(1\) hypersurface.  A \(1\)-polarization in this setting eliminates all polymomentum components at once, projecting the prequantum state space onto functionals of the field configuration \(\phi\) and its domain coordinates \(x\).  In our framework, the phase space is stratified over the hierarchy of mapping spaces, with each level carrying a single polymomentum component.  Applying a symplectic polarization at each level eliminates the \(p^\mu\) one by one; the cumulative effect is therefore equivalent to a \(1\)-polarization of~\cite{r2011}.  The same recursive pattern can also be interpreted as a multi-step condensation of topological excitations~\cite{K2014436,K2024}: at each step a class of topological charges is condensed, driving the system to the next lower level and suggesting that polarization may be regarded as a mathematical realization of topological condensation.

\section{Conclusion and Outlook}\label{s6}

We have developed a unified framework for observables in multisymplectic geometry. Algebraically, the introduction of a degree shifting Grassmann variable $u$ extends Rogers' $L_\infty$-algebra to Hamiltonian forms of all degrees, yielding a graded $L_\infty$-algebra (Theorem~\ref{thmmain}) whose brackets are defined via interior products with Hamiltonian multivector fields. Geometrically, interpreting $k$-form observables as $k$-dimensional defects leads to a semi-simplicial set $\bm{\mathrm{sOb}}_\bullet(M)$, which we prove is a Kan complex (Theorem~\ref{s4t1}) and thus provides an $n$-groupoid model for observables. Its linearization yields a combinatorial $n$-vector space, and the associated cochain complex $\bm{\mathrm{Ch}}^\bullet_{\mathrm{obs}}(M)$ encodes gluing obstructions as well as adiabatic invariants. Under the prequantum condition $[\omega/2\pi]\in H^{n+1}(M,\mathbb{Z})$, $\bm{\mathrm{sOb}}_\bullet(M)$ furnishes a combinatorial model of an $n$-gerbe with connection. Quantum states, defined as complex-valued cochains on $\bm{\mathrm{sOb}}_\bullet(M)$, form a cosimplicial set $\bm{\mathrm{qOb}}^\bullet(M)$, and a recursive inner product taking values in $U(1)[\bm{\mathrm{qOb}}^{\bullet+1}(M)]$ endows the linearized space with the structure of a categorified pre-$n$-Hilbert space. The hierarchical reduction of form degrees matches the $1$-polarization scheme of~\cite{r2011}. 

Several directions remain for future investigation. A detailed comparison with the $2$-Hilbert space of~\cite{BS} for $n=2$ are natural next steps. Moreover, the left $L_\infty$-module structure on $\bm{\mathrm{C}}_\bullet^{\mathrm{obs}}(M)$ indicates that the full observable algebra acts geometrically on $\bm{\mathrm{sOb}}_\bullet(M)$.  Finally, the $E_\infty$-algebra structure on $\bm{\mathrm{Ch}}^\bullet_{\mathrm{obs}}(M)$ is expected to capture braiding statistics of topological excitations, such as anyons in topological phases of matter. 

\section*{Acknowledgements}

The author is grateful to Xiang-Mao Ding for his guidance and support throughout this project. Special thanks are due to Hank Chen for his careful reading of the manuscript and for numerous insightful comments that substantially improved this work. His suggestions regarding the higher categorical structures relevant to the constructions in this paper, as well as his remarks on potential physical applications, provided valuable directions for refining the presentation and for future investigation. 

\bibliographystyle{plainurl} 
\bibliography{ref.bib} 

@article{HF,
	author = {Hélein, Frédéric},
	eprint={math-ph/0212036},
	archivePrefix={arXiv},
	primaryClass={math-ph},
	year = {2003},
	month = {01},
	pages = {},
	title = {Hamiltonian formalisms for multidimensional calculus of variations and perturbation theory},
	isbn = {9780821836354},
	doi = {10.1090/conm/350/06342}
}

@article{RN,
	author = {Román-Roy, Narciso},
	eprint={math-ph/0506022},
	archivePrefix={arXiv},
	primaryClass={math-ph},
	year = {2009},
	month = {11},
	pages = {},
	title = {{Multisymplectic Lagrangian and Hamiltonian formalisms of classical field theories}},
	journal = {Symmetry, Integrability and Geometry: Methods and Applications},
	doi = {10.3842/SIGMA.2009.100}
}

@article{Rogers_2011,
   title={{$L_\infty$-Algebras from Multisymplectic Geometry}},
   volume={100},
   ISSN={1573-0530},   
   DOI={10.1007/s11005-011-0493-x},
   number={1},
   journal={Letters in Mathematical Physics},
   publisher={Springer Science and Business Media LLC},
   author={Rogers, Christopher L.},
   year={2011},
   month=apr, pages={29–50},
   eprint={1005.2230},
   archivePrefix={arXiv},
   primaryClass={math-DG},}

@article{FORGER_2003,
   title={The Poisson Bracket for Poisson Forms in Multisymplectic Field Theory},
   volume={15},
   ISSN={1793-6659},   
   DOI={10.1142/s0129055x03001734},
   number={07},
   journal={Reviews in Mathematical Physics},
   publisher={World Scientific Pub Co Pte Lt},
   author={Forger, Michael and Paufler, Cornelius and Römer, Hartmann},
   year={2003},
   month=sep, pages={705–743},
   eprint={1608.08455},
   archivePrefix={arXiv},
   primaryClass={math-ph}, }

@article{Lada1994,
	author = {Tom Lada and Martin Markl},
	title = {{Strongly homotopy Lie algebras}},
	journal = {Communications in Algebra},
	volume = {23},
	number = {6},
	pages = {2147--2161},
	year = {1995},
	publisher = {Taylor \& Francis},
	doi = {10.1080/00927879508825335},
	eprint={hep-th/9406095},
	archivePrefix={arXiv},
	primaryClass={hep-th},
	
}

@article{lada1993,
   title={{Introduction to SH Lie algebras for physicists}},
   volume={32},
   ISSN={1572-9575},
   
   DOI={10.1007/bf00671791},
   number={7},
   journal={International Journal of Theoretical Physics},
   publisher={Springer Science and Business Media LLC},
   author={Lada, Tom and Stasheff, Jim},
   year={1993},
   month=jul, pages={1087–1103},
   eprint={hep-th/9209099},
   archivePrefix={arXiv},
   primaryClass={hep-th}, }

@article{Baez1995,
    author = "Baez, J. C. and Dolan, J.",
    title = "{Higher dimensional algebra and topological quantum field theory}",
    eprint = "q-alg/9503002",
    archivePrefix = "arXiv",
    doi = "10.1063/1.531236",
    journal = "J. Math. Phys.",
    volume = "36",
    pages = "6073--6105",
    year = "1995"
}

@article{Baez2009,
   title={Categorified Symplectic Geometry and the Classical String},
   eprint = "0808.0246",
   archivePrefix = "arXiv",
   primaryClass = "math-ph",
   volume={293},
   ISSN={1432-0916},  
   DOI={10.1007/s00220-009-0951-9},
   number={3},
   journal={Communications in Mathematical Physics},
   publisher={Springer Science and Business Media LLC},
   author={Baez, John C. and Hoffnung, Alexander E. and Rogers, Christopher L.},
   year={2009},
   month=nov, pages={701–725} }

@article{Baez2010ya,
    author = "Baez, John C. and Huerta, John",
    title = "{An Invitation to Higher Gauge Theory}",
    eprint = "1003.4485",
    archivePrefix = "arXiv",
    primaryClass = "hep-th",
    doi = "10.1007/s10714-010-1070-9",
    journal = "Gen. Rel. Grav.",
    volume = "43",
    pages = "2335--2392",
    year = "2011"
}

@incollection{Baez2005qu,
author = {Baez, John C. and Schreiber, Urs},
title = {Higher gauge theory},
booktitle = {Categories in Algebra, Geometry and Mathematical Physics},
series = {Contemp. Math.},
volume = {431},
pages = {7--30},
publisher = {American Mathematical Society},
address = {Providence, RI},
year = {2007},
isbn = {9780821839706},
doi = {10.1090/conm/431/08263},
editor = {Davydov, A. and others},
eprint = "math/0511710",
}

@article{Witten1983ar,
    author = "Witten, Edward",
    editor = "Stone, M.",
    title = "{Nonabelian Bosonization in Two-Dimensions}",
    reportNumber = "PRINT-83-0934 (PRINCETON)",
    doi = "10.1007/BF01215276",
    journal = "Commun. Math. Phys.",
    volume = "92",
    pages = "455--472",
    year = "1984"
}

@incollection{Lurie2009,
 author = {Lurie, Jacob},
 title = {On the classification of topological field theories},
 booktitle = {Current developments in mathematics, 2008},
 isbn = {978-1-57146-139-1},
 pages = {129--280},
 year = {2009},
 publisher = {Somerville, MA: International Press},
 language = {English},
 zbMATH = {5635200},
 Zbl = {1180.81122},
 doi="10.4310/CDM.2008.V2008.N1.A3",
 eprint={0905.0465},
 archivePrefix={arXiv},
 primaryClass={math.CT},
}

@article{B2024,
   title={{Generalized charges, part I: Invertible symmetries and higher representations}},
   eprint = "2304.02660",
   archivePrefix = "arXiv",
   primaryClass = "hep-th",
   volume={16},
   ISSN={2542-4653},   
   DOI={10.21468/scipostphys.16.4.093},
   number={4},
   journal={SciPost Physics},
   publisher={Stichting SciPost},
   author={Bhardwaj, Lakshya and Schäfer-Nameki, Sakura},
   year={2024},
   month=apr }

@article{B2025,
   title={{Generalized charges, part II: Non-invertible symmetries and the symmetry TFT}},
   eprint = "2305.17159",
   archivePrefix = "arXiv",
   primaryClass = "hep-th",
   volume={19},
   ISSN={2542-4653},  
   DOI={10.21468/scipostphys.19.4.098},
   number={4},
   journal={SciPost Physics},
   publisher={Stichting SciPost},
   author={Bhardwaj, Lakshya and Schäfer-Nameki, Sakura},
   year={2025},
   month=oct }

@article{G2014,
    author = "Gaiotto, Davide and Kapustin, Anton and Seiberg, Nathan and Willett, Brian",
    title = "{Generalized Global Symmetries}",
    eprint = "1412.5148",
    archivePrefix = "arXiv",
    primaryClass = "hep-th",
    doi = "10.1007/JHEP02(2015)172",
    journal = "JHEP",
    volume = "02",
    pages = "172",
    year = "2015"
}

@misc{B2023,
	title={Higher representations for extended operators}, 
	author={Thomas Bartsch and Mathew Bullimore and Andrea Grigoletto},
	year={2023},
	eprint={2304.03789},
	archivePrefix={arXiv},
	primaryClass={hep-th},
	url={https://arxiv.org/abs/2304.03789}, 
}

@article{S2018,
   title={Higher-order topological insulators},
   eprint = "1708.03636",
   archivePrefix = "arXiv",
   primaryClass = "math-ph",
   volume={4},
   ISSN={2375-2548},  
   DOI={10.1126/sciadv.aat0346},
   number={6},
   journal={Science Advances},
   publisher={American Association for the Advancement of Science (AAAS)},
   author={Schindler, Frank and Cook, Ashley M. and Vergniory, Maia G. and Wang, Zhijun and Parkin, Stuart S. P. and Bernevig, B. Andrei and Neupert, Titus},
   year={2018},
   month=jun }

@article{Q2024,
  title = {Discovery of Higher-Order Nodal Surface Semimetals},
  author = {Qiu, Huahui and Li, Yuzeng and Zhang, Qicheng and Qiu, Chunyin},
  eprint = "2311.17419",
  archivePrefix = "arXiv",
  primaryClass = "cond-mat.mtrl-sci",
  journal = {Phys. Rev. Lett.},
  volume = {132},
  issue = {18},
  pages = {186601},
  numpages = {7},
  year = {2024},
  month = {Apr},
  publisher = {American Physical Society},
  doi = {10.1103/PhysRevLett.132.186601},
 
}

@article{G1983,
    author = "Green, Michael B. and Schwarz, John H.",
    title = "{Properties of the Covariant Formulation of Superstring Theories}",
    reportNumber = "Print-84-0264 (QUEEN MARY COLL.)",
    doi = "10.1016/0550-3213(84)90030-0",
    journal = "Nucl. Phys. B",
    volume = "243",
    pages = "285--306",
    year = "1984"
}

@article{G1983w,
    author = "Green, Michael B. and Schwarz, John H.",
    title = "{Covariant Description of Superstrings}",
    reportNumber = "QMC-83-7",
    doi = "10.1016/0370-2693(84)92021-5",
    journal = "Phys. Lett. B",
    volume = "136",
    pages = "367--370",
    year = "1984"
}

@article{BS,
author = {Bunk, Severin and S\"{a}mann, Christian and Szabo, Richard J.},
title = {{The 2-Hilbert space of a prequantum bundle gerbe}},
journal = {Reviews in Mathematical Physics},
volume = {30},
number = {01},
pages = {1850001},
year = {2018},
doi = {10.1142/S0129055X18500010},
eprint={1608.08455},
archivePrefix={arXiv},
primaryClass={math-ph}
}

@book{MP,
	author    = {Michor, Peter W.},
	title     = {Topics in Differential Geometry},
	publisher = {American Mathematical Society},
	series    = {Graduate Studies in Mathematics},
	volume    = {93},
	year      = {2008},
	isbn      = {978-0-8218-2003-2},
	url       = {http://catdir.loc.gov/catdir/toc/ecip0813/2008010629.html}
}

@misc{K2024,
	title={Higher condensation theory}, 
	author={Liang Kong and Zhi-Hao Zhang and Jiaheng Zhao and Hao Zheng},
	year={2025},
	eprint={2403.07813},
	archivePrefix={arXiv},
	primaryClass={cond-mat.str-el},
	url={https://arxiv.org/abs/2403.07813}, 
}

@article{K2014436,
	title = {Anyon condensation and tensor categories},
	journal = {Nuclear Physics B},
	volume = {886},
	pages = {436-482},
	year = {2014},
	issn = {0550-3213},
	doi = {https://doi.org/10.1016/j.nuclphysb.2014.07.003},
	url = {https://www.sciencedirect.com/science/article/pii/S0550321314002223},
	author = {Liang Kong}
}

@article{f2023,
	author = {Greg Friedman},
	eprint={0809.4221},
	archivePrefix={arXiv},
	primaryClass={math.AT},
	ISSN = {00357596, 19453795},
	URL = {http://www.jstor.org/stable/44240054},
	journal = {The Rocky Mountain Journal of Mathematics},
	number = {2},
	pages = {353--423},
	publisher = {Rocky Mountain Mathematics Consortium},
	title = {SURVEY ARTICLE: AN ELEMENTARY ILLUSTRATED INTRODUCTION TO SIMPLICIAL SETS},
	urldate = {2026-03-04},
	volume = {42},
	year = {2012}
}

@article{KS2022,
	author = {Kraft, Andreas and Schnitzer, Jonas},
	title = {{An introduction to $L_\infty$-algebras and their homotopy theory for the working mathematician}},
	eprint={2207.01861},
	archivePrefix={arXiv},
	primaryClass={math.QA},
	journal = {Reviews in Mathematical Physics},
	volume = {36},
	number = {01},
	pages = {2330006},
	year = {2024},
	doi = {10.1142/S0129055X23300066}
}

@misc{r2011,
	title={Higher Symplectic Geometry}, 
	author={Christopher L. Rogers},
	year={2011},
	eprint={1106.4068},
	archivePrefix={arXiv},
	primaryClass={math-ph},
         url={https://arxiv.org/abs/1106.4068}, 
}

@article{FMT2024,
	author    = {Freed, Daniel S. and Moore, Gregory W. and Teleman, Constantin},
	title     = {Topological symmetry in quantum field theory},
	journal   = {Quantum Topol.},
	volume    = {15},
	number    = {3/4},
	pages     = {779--869},
	year      = {2024},
	doi       = {10.4171/QT/223},
	eprint={2209.07471},
	archivePrefix={arXiv},
	primaryClass={hep-th},
}

@article{SS2023,
	author = "Sati, Hisham and Schreiber, Urs",
	title = "{Flux Quantization on Phase Space}",
	eprint = "2312.12517",
	archivePrefix = "arXiv",
	primaryClass = "hep-th",
	doi = "10.1007/s00023-024-01438-x",
	journal = "Annales Henri Poincare",
	volume = "26",
	number = "3",
	pages = "895--919",
	year = "2025"
}

@incollection{SS2024,
	title = {Flux Quantization},
	editor = {Richard Szabo and Martin Bojowald},
	booktitle = {Encyclopedia of Mathematical Physics (Second Edition)},
	publisher = {Academic Press},
	edition = {Second Edition},
	address = {Oxford},
	pages = {281-324},
	year = {2025},
	isbn = {978-0-323-95706-9},
	doi = {10.1016/B978-0-323-95703-8.00078-1},
	author = {Hisham Sati and Urs Schreiber},
	eprint = "2402.18473",
	archivePrefix = "arXiv",
	primaryClass = "hep-th"
}

@article{FD2000,
	author = {Freed, Daniel},
	year = {2000},
	month = {12},
	pages = {},
	title = {Dirac Charge Quantization and Generalized Differential Cohomology},
	volume = {7},
	journal = {Surveys in Differential Geometry},
	doi = {10.4310/SDG.2002.v7.n1.a6},
	eprint={hep-th/0011220},
	archivePrefix={arXiv},
	primaryClass={hep-th},
}

@article{KLW,
	title = {Algebraic higher symmetry and categorical symmetry: A holographic and entanglement view of symmetry},
	author = {Kong, Liang and Lan, Tian and Wen, Xiao-Gang and Zhang, Zhi-Hao and Zheng, Hao},
	journal = {Phys. Rev. Res.},
	volume = {2},
	issue = {4},
	pages = {043086},
	numpages = {53},
	year = {2020},
	month = {Oct},
	publisher = {American Physical Society},
	doi = {10.1103/PhysRevResearch.2.043086},
}

@article{McC2013,
author    = {James E. McClure},
title     = {On semisimplicial sets satisfying the {Kan} condition},
journal   = {Homology, Homotopy and Applications},
volume    = {15},
number    = {1},
pages     = {73--82},
year      = {2013},
doi       = {10.4310/HHA.2013.v15.n1.a4},
mrclass   = {55U10},
mrnumber  = {MR3079177},
mrreviewer = {Daniel G. Davis},
zbl       = {1276.55016},
archivePrefix = {arXiv},
eprint    = {1210.5650},
primaryClass = {math.AT}
}

@article{Rou1971,
author    = {C. P. Rourke and B. J. Sanderson},
title     = {{$\Delta$-sets. {I}. {H}omotopy theory}},
journal   = {Quart. J. Math. Oxford Ser. (2)},
volume    = {22},
pages     = {321--338},
year      = {1971},
mrclass   = {55U10},
mrnumber  = {0301784}
}

@article{Wal2021,
author = "Waldorf, Konrad and Kristel, Peter and Ludewig, Matthias",
title = "{2-vector bundles}",
eprint = "2106.12198",
archivePrefix = "arXiv",
primaryClass = "math.DG",
doi = "10.21136/HS.2025.02",
journal = "Higher Struct.",
volume = "9",
number = "1",
pages = "36--87",
year = "2025"
}
\end{document}